\newtheorem{theorem}{Theorem}[section]
\newtheorem{nt}{Selfnote}
\newtheorem{definition}[theorem]{Definition}
\newtheorem{lemma}[theorem]{Lemma}
\newtheorem{proposition}[theorem]{Proposition}
\newtheorem{corollary}[theorem]{Corollary}
\newtheorem{claim}[theorem]{Claim}
\newtheorem{fact}[theorem]{Fact}
\newtheorem{remk}[theorem]{Remark}
\newtheorem{exmp}[theorem]{Example}
\newenvironment{remark}{\begin{remk}
\begin{normalfont}}{\end{normalfont}
\end{remk}}
\def\FullBox{\hbox{\vrule width 8pt height 8pt depth 0pt}}
\def\qed{\ifmmode\qquad\FullBox\else{\unskip\nobreak\hfil
\penalty50\hskip1em\null\nobreak\hfil\FullBox
\parfillskip=0pt\finalhyphendemerits=0\endgraf}\fi}
\def\qedsketch{\ifmmode\Box\else{\unskip\nobreak\hfil
\penalty50\hskip1em\null\nobreak\hfil$\Box$
\parfillskip=0pt\finalhyphendemerits=0\endgraf}\fi}
\newenvironment{proof}{\begin{trivlist} \item {\bf Proof:~~}}
  {\qed\end{trivlist}}
\newcommand{\etal}{{\it et~al.\ }}
\newcommand{\ie} {{\it i.e.,\ }}
\newfont{\inhead}{eufm10 scaled\magstep1}
\def\abs#1{\left| #1 \right|}
\newcommand{\N}{{\mathbb{N}}}
\newcommand{\poly}{{\mathrm{poly}}}
\renewcommand{\Pr}{\mathop{\mathbb P}\displaylimits}
\newcommand{\eps}{\varepsilon}
\newcommand{\Exs}[2]{\mathbb{E}_{#1}\left[ #2 \right]}
\newcommand{\mcal}[1]{\mathcal{#1}}
\newcommand{\bra}[1]{\langle #1|}
\newcommand{\ket}[1]{|#1\rangle}
\newcommand{\Tr}{\mbox{\rm Tr}}
\newtheorem{observation}[theorem]{Observation}
\newcommand{\tnote}[1]{[{\bf Thomas: #1]}}
\begin{document}

\title{Near-optimal extractors against quantum storage}
\author{Anindya De\thanks{Computer Science Division, University of California, Berkeley, CA, USA. \texttt{anindya@cs.berkeley.edu}. 
Supported by the ``Berkeley Fellowship for Graduate Study'' and Luca Trevisan's  US-Israel BSF grant 2006060.} \and Thomas Vidick\thanks{Computer Science Division, University of California, Berkeley, CA, USA. \texttt{vidick@cs.berkeley.edu}. Supported by ARO grant W911NF-09-1-0440 and NSF grant CCF-0905626.}
}
%

\begin{titlepage}

\maketitle

\begin{abstract}
We show that  Trevisan's extractor and its variants~\cite{T99,RRV99} are secure against bounded quantum storage adversaries. 
 One instantiation gives the first such extractor to achieve an output length $\Theta(K-b)$, where $K$ is the source's entropy and $b$ the adversary's storage,  together with a poly-logarithmic seed length. Another instantiation achieves a logarithmic key length, with a slightly smaller output length $\Theta((K-b)/K^\gamma)$ for any $\gamma>0$. In contrast, the previous best construction~\cite{TS09} could only extract $(K/b)^{1/15}$ bits. 
Some of our constructions have the additional advantage that every bit of the output is a function of only a polylogarithmic number of bits from the source, which is crucial for some cryptographic applications. 

 Our argument is based on bounds for a generalization of quantum random access codes, which we call \emph{quantum functional access codes}. This is crucial as it lets us avoid the local list-decoding algorithm central to the approach in~\cite{TS09}, which was the source of the multiplicative overhead.
\end{abstract} 

\thispagestyle{empty} 
\vfill
\noindent \textbf{Keywords:} Extractors, Quantum storage, Random Access Codes, List decodable code 
 
\end{titlepage}
 
\section{Introduction}\label{sec:intro}

Randomness extractors are fundamental building blocks in pseudorandomness theory, with many applications to derandomization, error-correcting codes, and expanders, among others.
They are also of central importance in cryptography, where they are often used to build key generation primitives. In this context, one usually has the notion of an adversary, a malicious observer who is trying to discover a bit of the honest player's output. A prominent model for adversaries is the bounded storage model, introduced by Maurer~\cite{Maurer92}, in which the adversary is allowed to store a limited amount of information about the extractor's input. 

Formally, we say that a function $\text{Ext}:\{0,1\}^N \times \{0,1\}^t \rightarrow \{0,1\}^m$ is  a $(K,\eps)$ strong extractor if for every distribution $X$ with  min-entropy  at least $K$ ($X$ is called the \emph{source}) and uniformly random $Y$ (called the \emph{seed}), the distribution $(Y,\text{Ext}(X,Y))$ is within a statistical distance of at most $\eps$ from the uniform distribution. The extractor is said to be secure against $b$ bits of storage if $Ext(X,Y)$ is $\eps$-close to uniform even from the point of view of an adversary who has been allowed to store $b$ bits of information about $X$, and has also later been revealed the seed $Y$.

Constructions of extractors are known that are almost-optimal in all parameters, even in the presence of the adversary (in fact, a result by Lu~\cite{L04} shows that any $(K,\eps)$ strong extractor is essentially a $(K+b,\eps)$ extractor secure against $b$ bits of storage). Nevertheless, in a world in which no adversary can be trusted, K\"onig~\etal~\cite{KMR05} introduced the following interesting twist: what if the adversary is allowed \emph{quantum} memory? In this setting, the fundamental difficulty that arises is a familiar one, with a long history: how much information can be encoded in a quantum state? 

The fact that this question can admit very different answers depending on its precise formulation is reflected in the fact that some, but not all,  classical extractor constructions are secure in the presence of a quantum adversary, as was demonstrated in~\cite{GKKRW}. While many constructions have been shown to be sound on a case-by-case basis~\cite{KMR05,KT07,FS08,TS09}, all have parameters that are far from optimal either in terms of seed length or of output length.

Central to the proof of our result are bounds on a construct which we call quantum functional access codes (QFAC), and we introduce them next.

\paragraph{Quantum functional access codes.} Holevo~\cite{Hol73} was the first to tackle the question of the information capacity of a quantum state, showing that one needs at least $n$ qubits in order to encode $n$ bits of information. However, this bound only holds when it is required that the whole $n$ bits be recoverable from the quantum storage. As such, it is generally not applicable in a cryptographic context, where typically even partial information is important. Instead of asking for the whole input $x\in\{0,1\}^n$ to be recoverable from its encoding $\Psi(x)$, Ambainis~\etal~\cite{ANTV02} consider encodings in which it is only required that any bit of $x$ can be recovered from $\Psi(x)$ with probability $1/2+\eps$ (over the measurement's randomness), and they call such encodings `random access codes' (RACs). Note that, since the encoding is quantum, the recoverability of any one bit does not imply the recoverability of the whole string $x$, so that Holevo's bound does not apply. Nevertheless, Ambainis~\etal showed that RACs require essentially $(1-H(1/2+\eps))n$ qubits to encode $n$ bits, where $H$ is the binary entropy function, providing a linear lower bound for fixed $\eps$. These bounds have proved instrumental in many results in information theory. In fact, as pointed out in~\cite{TS09}, random access codes provide a way to construct one-bit extractors that are secure against quantum storage.

We push this question even further: what if, instead of asking that the encoding lets us recover any bit of the input, we asked that it lets us recover any one out of some fixed set of functions of the input? For example, we could ask about encodings that let us recover the XOR of any $k$ bits of the input\footnote{Such codes were introduced in~\cite{ARW08}, where they are called XOR-QRACs}, but one can also consider more general settings. 

One might ask about the relevance of such encodings, when we already know that there are strong linear lower bounds on RACs --- surely, these will extend to any encoding which lets us recover more than any single bit of the input. The key point here is that, even though both Holevo's bound and the RAC lower bounds are linear in the input length when the success probability $p$ is fixed, the two bounds scale very differently when one considers the dependence on $p$: while an improvement on Holevo's lower bound, due to Nayak and Salzman~\cite{NS06}, scales as $n-\log 1/p$, the RAC bound scales as $(4\eps^2/\ln 2)n$ for small $\eps = 2p-1$. So we are asking, how does the minimal length of the code scale with the success probability, depending on the set of functions that we are trying to recover?

Define a $(n,b,\eps)$ QFAC for a set of $n$-bit strings $A$ and a set of functions $\mathcal{C}$ from $A$ to $\{0,1\}$ as a $b$-qubit encoding of strings $x\in A$ such that, for any function $f\in\mathcal{C}$, one can recover $f(x)$ from the encoding of $x$ with success probability $1/2+\eps.$\footnote{A RAC is then simply a QFAC for the set of coordinate functions $f_i:x\mapsto x_i$.} Intuitively, the more the set of functions $\mathcal{C}$ is error resilient (i.e., the more spread-out the images $(f(x))_{f\in\mathcal{C}} \in \{0,1\}^{|\mathcal{C}|}$), the stronger the lower bound should be on the length of the encoding.
For example, using a simple reduction to known results we can show that any $(n,b,\eps)$ QFAC for the set $\mathcal{C} = \{f_y:x\mapsto x\cdot y\mod 2,\, y\in\{0,1\}^n\}$ must have length $b\geq n-\log 1/\eps$. If one simply used the fact that any bit of $x$ can be recovered from such a QFAC with probability $1/2+\eps$, the resulting bound would be the much weaker $O(\eps^2 n)$.

We believe that QFACs constitute a primitive that should be of wide interest in studying the properties of quantum states from an information-theoretic point of view. In this paper, we demonstrate the relevance of this construct by showing how good bounds on some QFACs can be used to prove the security of an extractor against quantum storage with almost-optimal parameters. In fact, many previous constructions of extractors against quantum storage can be seen as implicitly proving bounds on QFACs. For example, the construction in~\cite{KMR05} shows that any $(n,b,\eps)$ QFAC for a set of $2$-universal hashing functions must have length $b\geq n-2\log 1/2\eps$. 

\paragraph{Techniques.} In this section we give an overview of our proof technique, explaining the connection between extractors and QFACs in the context of Trevisan's general construction paradigm~\cite{T99}. To describe this, let us first give a brief overview of the main steps that go into the proof of the construction by Ta-Shma~\cite{TS09}. 

The construction starts by encoding the weakly random source $x \sim X$  using a locally list-decodable code $C$~\cite{STV99}. This is  followed by an application of the Nisan-Wigderson generator \cite{NW94}, interpreting $C(x)$ as the truth table of the ``hard'' function. 

The proof of correctness for this construction, as the first part of ours, follows the general reconstruction framework of~\cite{T99}. For the sake of contradiction, assume that there is a test $T$, which performs a measurement on the adversary's quantum encoding $\Psi(x)$ in order to distinguish the output from uniform with advantage $\eps$. A Markov argument shows that for at least an $\eps/2$ fraction of the samplings $x$ from the source (call them bad samplings), $T$ can distinguish the output (when the source is $x$) from uniform with success at least $\eps/2$. Consider any such bad sampling $x$. A standard hybrid argument, along with properties of the Nisan-Wigderson generator, allows us to construct a circuit $T'$ (using little non-uniformity about $x$) which predicts a random position of $C(x)$ with probability $\frac{1}{2}+\frac{\eps}{m}$. Further, $T'$ makes exactly one query to $T$. 

At this stage, we have constructed a small circuit $T'$, which uses the adversary's quantum information in order to predict the bits of $C(x)$ with some small success probability. The proof in~\cite{TS09} shows how from such a circuit, one can construct another circuit which predicts any position of $x$ with probability $0.99$ and queries $T'$ at most  $q=(m/\eps)^c$ times ($c=15$ for the code in~\cite{TS09}). This gives a random access code for $x$; however since it makes $q$ measurements on the quantum state $\Psi(x)$, the no-cloning theorem forces us to see it as having a length of $q\cdot b$ qubits. The main drawback of this method is  that the quantum state needs to be copied a large number of times in order to get a RAC -- thus yielding a weaker bound on the output length than one might hope for. 

Our proof departs fundamentally from the usual reconstruction paradigm at this point: instead of using a short RAC for $C(x)$ to construct a longer RAC for $x$, we give a direct analytical argument showing that any RAC for $C(x)$ must be long. Note that a RAC for $C(x)$ is simply a QFAC for the class of functions $f_i:x\mapsto C(x)_i$. Intuitively, such a QFAC cannot be short, even though its success probability $1/2+\eps/m$ is small. If the QFAC is classical ({\em C}FAC), this is easy to show: assume that there existed a short CFAC for this problem. One can just repeat the recovery procedure to get a string $y$ that agrees with $C(x)$ at a fraction $1/2+\eps/m$ of positions, and then one can use the good list-decoding properties of $C$ to argue that the CFAC essentially lets us recover the whole input $x$, and hence must be long.  In the quantum setting, however, it is far from obvious if this is true, the primary difficulty being that we cannot repeat the recovery procedure, since it involves measuring a quantum state.

In order to overcome this difficulty, we directly prove a lower bound on the length of the QFAC derived from the code. This lets us derive a contradiction, proving that our extractor is safe against quantum storage. The idea for the lower bound consists in seeing any good QFAC as an adversary which uses small memory, and is able to predict codeword positions.  Using the fact that list decodable codes can be interpreted as one-bit strong extractors, we can then use a a result by Koenig and Terhal~\cite{KT07} to show that such an adversary would imply a classical adversary with similar storage against a one-bit strong extractor, which we know does not exist. This leads to a contradiction, thus proving the lower bound.

\paragraph{Our results}
We show that any extractor based on Trevisan's construction paradigm and its variants~\cite{T99,RRV99} is also safe against a bounded quantum storage adversary, with almost the same parameters as the classical construction. Rather than give the full technical result here (see Theorem~\ref{thm:XORextmain}), we discuss instantiations with two specific codes. 

We first use a code from~\cite{GHSZ00}, which is obtained through the concatenation of the Reed-Solomon code and the Hadamard code. This lets us prove the following: 

\begin{theorem}\label{thm:main} For any constants $\gamma,c,c'>0$, there is a polynomial-time computable function $Ext:\{0,1\}^{N} \times \{0,1\}^t \rightarrow \{0,1\}^m$, where $t = O(\log N)$ and $m=\Omega\left(\frac{K-b}{K^\gamma}\right)$, which is a $(K, N^{-c})$ extractor against $b$ qubits of quantum storage, for any $K\geq N^{c'}$. 
\end{theorem}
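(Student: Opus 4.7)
The plan is to instantiate Trevisan's extractor paradigm with the Reed-Solomon $\circ$ Hadamard code $C$ of~\cite{GHSZ00}, which has codewords of length $N' = \poly(N)$ over $\{0,1\}$ and strong list-decoding properties (small relative distance but list-decodable up to radius $1/2 - \delta$ with lists of size $\poly(1/\delta)$). Given the source $x \in \{0,1\}^N$, encode it as $C(x) \in \{0,1\}^{N'}$ and interpret this as the truth table of a ``hard function'' of $\log N'$ bits. Then apply the Nisan--Wigderson generator using an appropriate combinatorial design $\{S_1,\ldots,S_m\} \subseteq [t]$ with $|S_i| = \log N'$ and pairwise intersections of size $\log m$; the output on seed $y\in\{0,1\}^t$ is $Ext(x,y)=(C(x)_{y|_{S_1}},\ldots,C(x)_{y|_{S_m}})$. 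Choosing $t = O(\log N)$ and balancing the design so that $m = \Omega((K-b)/K^\gamma)$ is the standard NW parameter selection given the lower bound we will prove on the ``hardness'' of $C(x)$ against quantum storage.

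For correctness, proceed by contradiction: suppose there is an adversary who stores a $b$-qubit state $\Psi(x)$ together with a measurement $T$ (after seeing $y$) that distinguishes $(Y,Ext(X,Y))$ from $(Y,U_m)$ with advantage $\eps = N^{-c}$. A Markov argument isolates an $\eps/2$-fraction of source values $x$ for which $T$ achieves distinguishing advantage $\ge\eps/2$ on the fixed source. For each such ``bad'' $x$, the standard Yao-style hybrid + NW reconstruction argument produces a quantum predictor $T'$ which, using only $O(m \log m)$ nonuniform bits of advice about $x$ (the usual NW fixing of bits outside $S_i$ for a random index $i$), takes the state $\Psi(x)$ together with a uniformly random position $j \in [N']$ and outputs a guess for $C(x)_j$ with probability $\ge 1/2 + \eps/(2m)$. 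Crucially, $T'$ queries $\Psi(x)$ only once, so together with the advice it constitutes an $(N,b+O(m\log m),\eps/(2m))$ QFAC for the function class $\{f_j : x \mapsto C(x)_j\}_{j\in[N']}$.

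The heart of the proof is to show that this QFAC cannot exist with the chosen parameters. Following the approach outlined in the introduction, interpret the map $(x,j)\mapsto C(x)_j$ as a one-bit strong extractor from $(N,K)$-sources with seed of length $\log N'$: the list-decoding properties of $C$ imply that on any source of sufficient min-entropy, this map produces an output whose bias averaged over $j$ is at most $N^{-\Omega(1)}$. The existence of the QFAC would then amount to a quantum bounded-storage adversary that predicts this one-bit extractor's output with advantage $\eps/(2m)$; applying the Koenig--Terhal reduction~\cite{KT07} converts it into a \emph{classical} bounded-storage adversary with storage roughly $b + O(m\log m) + O(\log(m/\eps))$ against the same one-bit extractor. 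Standard min-entropy bookkeeping (Lu's lemma) rules out such classical adversaries whenever $K$ exceeds this storage bound by more than $\log(m/\eps)$, which, for our parameter choices with $m = \Omega((K-b)/K^\gamma)$ and $\eps=N^{-c}$, is satisfied once $K \ge N^{c'}$.

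The main obstacle I expect is the QFAC lower bound in the last step: in the classical setting one would simply re-run the decoder $O(m^2/\eps^2)$ times, agree with $C(x)$ on a $1/2+\eps/(2m)$ fraction of positions, and invoke list-decoding of $C$ to recover $x$ up to a short list, yielding a direct encoding lower bound. The no-cloning theorem forbids this repetition quantumly, and this is exactly where the previous $q = (m/\eps)^{15}$ blowup in~\cite{TS09} came from. Routing through the Koenig--Terhal classical-reduction for one-bit extractors, rather than trying to simulate repetition on the quantum state, is what lets us bypass this loss and achieve the near-optimal output length $\Omega((K-b)/K^\gamma)$ with only a logarithmic seed.
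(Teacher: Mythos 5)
Your proposal follows essentially the same route as the paper: instantiate Trevisan's paradigm with the Reed--Solomon $\circ$ Hadamard code of~\cite{GHSZ00}, run the standard Markov + hybrid + Yao reconstruction to produce a one-query predictor for random codeword positions $C(x)_j$, observe that this predictor plus its advice is a quantum functional access code for the coordinate functions of $C$, and then bound the QFAC by viewing $(x,j) \mapsto C(x)_j$ as a classical one-bit strong extractor whose security is lifted to the bounded quantum storage setting via the K\"onig--Terhal theorem. That last step is exactly the paper's key conceptual move, and you correctly identify it as what bypasses the $q = (m/\eps)^{O(1)}$ repetition blowup of~\cite{TS09}.

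There is, however, a concrete inconsistency in your design parameters that would break the claimed output length for small $\gamma$. You say the combinatorial design has pairwise intersections of size $\log m$, and then assert that the reconstruction needs only $O(m\log m)$ bits of advice. These two statements are incompatible: with only an intersection bound $|S_i\cap S_j|\leq \log m$, the hardwired advice $\sum_{j<i} 2^{|S_i \cap S_j|}$ can be as large as $\Theta(m^2)$, which forces $m = O(\sqrt{K-b})$ rather than $\Omega((K-b)/K^\gamma)$ --- so the argument fails for any $\gamma < 1/2$. The fix is to use the \emph{weak designs} of Raz, Reingold and Vadhan~\cite{RRV99}, which bound the sum $\sum_{j<i} 2^{|S_i\cap S_j|} \leq \rho(m-1)$ directly, with $t = O\bigl((\log\overline N)^2/\log\rho\bigr)$. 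Choosing $\rho = K^{\gamma/2}$ (using $K\geq N^{c'}$) gives $t = O(\log N)$ and advice $\approx m K^{\gamma/2}$, which for $m = \Theta((K-b)/K^\gamma)$ is still $o(K-b)$, as needed. (Your appeal to ``Lu's lemma'' in the final step is also a loose paraphrase --- the paper instead bounds the one-bit extractor's entropy threshold directly from the approximate list-decoding radius of $C$ and then invokes Theorem III.1 of~\cite{KT07} --- but the bookkeeping comes out the same.)
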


We note that the construction in~\cite{TS09} uses the concatenation of a Reed-Muller code with the Hadamard code, the parameters of the Reed-Muller code being chosen so that one can do local list-decoding. In contrast, our analysis just needs a good list-decoding radius, but no local list-decoding property.  Hence our result carries over to~\cite{TS09} and in particular implies that the construction in~\cite{TS09} has much better output length than the one shown in that paper, which was $\Omega((K/b)^{1/15})$.

This first construction does not have the desirable property of \emph{local computability}. By using a different code, we can also show the following:

\begin{theorem}
For any constants $\alpha,c>0$, there is a function $Ext:\{0,1\}^{N} \times \{0,1\}^t \rightarrow \{0,1\}^m$, where $t = O(\log^4 N)$ and $m=\Omega(\alpha N - b )$, which is a $(\alpha N, N^{-c})$ extractor against $b$ qubits of quantum storage. Moreover, each bit output by the extractor is computable in $\poly \log N$ time. 
\end{theorem}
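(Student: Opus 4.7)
The plan is to instantiate the general reduction captured by Theorem~\ref{thm:XORextmain} with a carefully chosen \emph{locally computable} code, in place of the Reed-Solomon--Hadamard concatenation used for Theorem~\ref{thm:main}. Recall that in Trevisan's paradigm the $i$-th output bit of the extractor is $C(x)_{S_i(y)}$, where $C$ is the underlying code and $\{S_i(y)\}$ is a Nisan-Wigderson (NW) design. Locality of every output bit therefore follows as soon as (a) each bit of $C(x)$ is a function of only $\poly\log N$ bits of $x$ and is computable in $\poly\log N$ time, and (b) each set $S_i(y)$ is itself computable in $\poly\log N$ time from $(i,y)$.

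For the code I would use a low-weight parity / restricted-Hadamard code $C(x)_T := \bigoplus_{j \in T} x_j$ indexed by subsets $T \subseteq [N]$ of some fixed weight $k = \poly\log N$. Each codeword bit is then a parity of at most $k$ source bits, so condition (a) is immediate. The code has length $\binom{N}{k} \leq 2^{\poly\log N}$, so $\log|C(x)| = O(\poly\log N)$, and I would then use the RRV~\cite{RRV99} variant of Trevisan's extractor, whose NW design requires seed length $t = O(\log^4 N)$ and can be implemented so that $S_i(y)$ is computable in $\poly\log N$ time, giving (b). Combined, each extractor output bit is a parity of at most $k$ bits of $x$ specified by a $\poly\log N$-time computable index set, hence is computable in $\poly\log N$ time.

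The heart of the argument is then to verify the QFAC hypothesis required by Theorem~\ref{thm:XORextmain} for this code family. Following the reduction outlined in the techniques section, a distinguisher with advantage $\eps$ on output length $m$ yields a predictor that guesses a random bit of $C(x)$ with advantage $\eps/m$ using the $b$-qubit quantum storage, i.e.\ an $(N,b,\eps/m)$ QFAC for the class $\{x\mapsto\langle x,T\rangle : |T|=k\}$. Since the corresponding family of parities is a good one-bit strong extractor for sources of min-entropy $\alpha N$, the Koenig--Terhal classicalization (already invoked in the paper's main argument) converts such a QFAC into a classical adversary breaking that one-bit extractor, yielding the lower bound $b \geq \alpha N - O(\log m/\eps)$. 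Choosing $\eps = N^{-c}$ and solving for $m$ gives the claimed output length $m = \Omega(\alpha N - b)$.

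The main obstacle is the simultaneous satisfaction of all three parameter constraints: code length at most $2^{\poly\log N}$ (so that the NW design costs only $O(\log^4 N)$ seed bits), $\poly\log N$-locality of each codeword bit, and a list-decoding / one-bit-extractor guarantee strong enough that the induced QFAC lower bound yields $\Omega(\alpha N - b)$ output rather than a subconstant fraction of $N$. The low-weight parity code is the natural candidate balancing all three, which is precisely the reason one departs from the Reed-Solomon--Hadamard concatenation used in Theorem~\ref{thm:main}; the most delicate point will be tuning $k$ so that the resulting parity family is still a strong enough one-bit extractor against storage close to $\alpha N$ to avoid losing a constant factor in the output length.
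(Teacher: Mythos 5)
Your proposal is essentially the paper's own argument: the ``low-weight parity code'' you describe is exactly the XOR code $C_k$ used in Corollary~\ref{thm:corr1}, and the proof structure (reduce a distinguisher to a predictor for a random codeword bit via Lemma~\ref{lem:main1}, view that predictor as a QFAC for $\{x\mapsto C_k(x)_T\}$, then invoke the K\"onig--Terhal argument encapsulated in Proposition~\ref{prop:listqfac} to bound the number of bad $x$'s) is what the paper does.

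Two quantitative points in your sketch are glossed over and deserve mention. First, for every output bit to be computable in $\poly\log N$ time you must use the Hartman--Raz designs of Theorem~\ref{thm:12}, not the RRV construction of Theorem~\ref{thm:1}: RRV gives the right seed length $O(\log^4 N)$ but is not known to produce individual blocks $S_i$ in polylogarithmic time, which is exactly why the paper switches to HR01 for this instantiation. Second, your stated bound $b \geq \alpha N - O(\log m/\eps)$ is not what the one-bit extractor derived from $C_k$ actually yields: the XOR code is only \emph{approximately} list-decodable, with approximation parameter $\delta = (1/k)\ln(2/\eps)$ (cf.\ Corollary~\ref{cor:xorqfac}), so Proposition~\ref{prop:listqfac} gives $\log|A| < b + H(\delta)N + O(\log 1/\eps)$ and the $H(\delta)N$ term must be driven down to $O(\delta_0 N)$ by taking $k = \Theta(\log(m/\eps)/\delta_0^2)$. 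You correctly flag the tuning of $k$ as ``the most delicate point,'' but the way it bites is precisely through this $H(\delta)N$ loss, which is what forces linear entropy rate and costs the small constant fraction $2\delta_0 N$ visible in Corollary~\ref{thm:corr1}; simply taking $k = \poly\log N$ of unspecified size would leave $H(\delta)N = \Theta(N)$ and the bound would be vacuous.
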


Even though it has a slightly larger seed length (note however that its output length is the optimal $\Theta(\alpha N - b)$), a major advantage of this extractor is its simplicity: each bit of the output is simply the XOR of $O(\log N)$ bits of the source, chosen based on the seed. In particular, it is locally computable\footnote{For this to hold, we also need to check that the bits to be XOR-ed can be chosen in poly-logarithmic time, which is the case in this construction.}. On the other hand, that construction is restricted to extracting from linear entropy rates. This is inevitable, as lower bounds by Viola~\cite{V03} show that locally computable extractors cannot extract from sources with entropy less than $N^{0.99}$ using a polylogarithmic seed length. 

The QFACs at the heart of this second construction are in fact the XOR-QRACs from~\cite{ARW08}. A by-product of our proof is an improvement of the lower bound proved in that paper on the length of such codes (see Corollary~\ref{cor:xorqfac}).

\medskip

A nice side feature of both these constructions, especially if one is interested in cryptographic applications, is that it is possible to achieve an arbitrary inverse polynomial statistical distance from the uniform distribution, while paying only a polylogarithmic cost in terms of output length and seed length (this will be apparent from the more detailed statement of Theorem~\ref{thm:main} given in Section~\ref{sec:main}). This property was not known to hold for previous short seed extractor constructions against quantum storage.

\paragraph{Applications to cryptography.}
Our results are of direct applicability to the following key expansion scenario.
 Alice and Bob share a small secret uniformly random key $k$. They would like to expand it into a longer key $k'$ in order to securely communicate in presence of an adversary Eve. A public source of weak randomness $R$ (assume that $R$ has min-entropy at least $K$) is available to all parties. When the string $R$ is broadcast, Eve is allowed to compute an arbitrary function $\Psi$ which maps the input  to a state on $b$ qubits \ie $\Psi:\{0,1\}^{|R|} \rightarrow \mathbb{C}^{2^b \times 2^b}$ and store the result. However, once she stores $\Psi(R)$, her access to $R$ is cut off. The goal is to come up with an efficient function $Ext$ which can be used by Alice and Bob to compute the shared string $k'=Ext(R,k)$. The required security condition is that $k'$ is close to being uniformly random to Eve, even given her knowledge of $\Psi(R)$. In fact, we would like $k'$ to remain random even if $k$ is later revealed to Eve (after $\Psi(R)$ is computed and access to $R$ has been cut off). 

For this application, it is important that $Ext$ be \emph{locally computable}, i.e. individual bits of the output should be a function of a polylogarithmic number of bits of the source $R$. Indeed, since we are putting a cap on the adversary's storage it would be unreasonable not to put a similar cap on the memory used by the honest parties Alice and Bob to compute bits of their shared key. 

Our second construction has the property of being locally computable: every bit of the output is a function of polylogarithmically many bits from the source. While various constructions of classical locally computable extractors are already known~\cite{DM04,L04,V04,DT09}, ours are the first to be proved secure against quantum adversaries. This makes them particularly suitable for use in the context of bounded storage cryptography. 

We note here that the results in this paper have recently been extended by Portmann, Renner, and the authors to show that Trevisan's extractor is secure in a broader context than that of the bounded-storage model~\cite{DPRV09}: they show security when one has a lower bound on the \emph{conditional min-entropy} of the source, conditioned on the adversary's quantum information. This is a more general assumption since it is implied by the bounded storage assumption, but the converse is not true in general. Proving security in this setting is crucial in a cryptographic context, as it allows secure composability of the extractor with other cryptographic primitives.

\paragraph{Organization of the paper.} We start with some preliminaries in Section~\ref{sec:prelim}. In Section~\ref{sec:qfac} we introduce quantum functional access codes and give bounds for some specific families of these codes. In Section~\ref{sec:main} we describe our construction and state its parameters. Finally, the proof of security is given in Section~\ref{sec:mainsecurity}.
 

\section{Notation and Preliminaries}\label{sec:prelim}
The following notations are used throughout the paper.  For $x \in \{0,1\}^n$, $x_i$ denotes the $i^{th}$ bit of $x$. Given two $n$-bit strings $x,y$, we let $\Delta(x,y)$ denote their relative Hamming distance, i.e. the fraction of positions at which they differ.  $\mathcal{D}_b$ denotes the set of all density matrices on $b$ qubits (complex $2^b$-dimensional positive matrices with trace $1$). In general, a measurement $M$ is described by a list of positive operators $M_a$ such that $\sum_a M_a^\dagger M_a = Id$. The probability that outcome $a$ is observed when the measurement is performed on a density matrix $\rho$ is then given by $\Tr(M_a\rho M_a^\dagger)$. 
All logarithms are taken in base $2$. Throughout, $H$ will denote the binary entropy function $H(x) = - x\log x - (1-x)\log(1-x)$ for $0<x<1$. We set the convention that $H(0)=H(1)=0$.

\paragraph{Distributions.} The uniform distribution on $\{0,1\}^n$ is denoted by $U_n$. 
We will manipulate random variables that have both classical and quantum parts. In general, given two classical random variables $X$, $Y$, $X\circ Y$ is the same as the random variable $(X,Y)$. Given two states $\rho,\sigma$, $\rho\circ \sigma$ is just $\rho\otimes \sigma$. Finally, given a classical random variable $X:\Omega \rightarrow \{0,1\}^n$ and a quantum random variable $\rho:\Omega\rightarrow \mathcal{D}_b$, $X\circ \rho$ denotes the state $\Exs{w\in\Omega}{ \ket{X(w)}\bra{X(w)}\otimes \rho(w)}$. The statistical distance between two distributions $D_1$ and $D_2$ (or, more generally, the trace distance when these distributions involve quantum components) is denoted by $\|D_1-D_2\|$.

\begin{definition}
A (classical) distribution $X$ is said to have min-entropy at least $K$ (denoted $H_{\infty}(X) \geq K$) if  $\forall x$, $Pr[X=x] \leq 2^{-K}$. 
\end{definition}

\paragraph{Extractors.} We first give the the formal definition of a strong extractor.

\begin{definition}\label{def:strongext}
$Ext:\{0,1\}^N \times \{0,1\}^t \rightarrow \{0,1\}^m$ is said to be a $(K,\eps)$ strong extractor if for every distribution $X$ with min-entropy at least $K$, we have that $\|U_{m+t} - Ext(X,U_t) \circ U_t\| \leq \eps$. Here both $U_t$'s in the second expression correspond to the same sampling.

 $X$ is usually called the \emph{source} (and $N$ its length), while the extractor's second input is called the \emph{seed} (of length $t$). 
\end{definition}

We now extend this definition to that of a strong extractor secure against a bounded-storage quantum adversary.

\begin{definition}\label{def:strongquantum}
$Ext:\{0,1\}^N \times \{0,1\}^t \rightarrow \{0,1\}^m$ is said to be a $(K,\eps)$ strong extractor against $b$ qubits of quantum storage if for every map $\Psi:\{0,1\}^N \rightarrow \mathcal{D}_{b}$ and every distribution $X$ such that $H_{\infty}(X) \geq K$
\begin{align}\label{eq:stat}
\|U_m \circ \Psi(X) \circ U_t - Ext(X,U_t) \circ \Psi(X) \circ U_t\| \leq \eps
\end{align}
where both $U_t$'s in the second expression correspond to the same sampling. 
\end{definition}

We note that condition (\ref{eq:stat}) above is equivalent to requiring that for any collection of measurements $\{M_{u,y}^0,M_{u,y}^1\}$ on $\mathcal{D}_b$,
\begin{align*} 
\Big|\mathbb{E}_{x\sim X, y\sim U_t}\Big[\text{Tr}\left(\Exs{u\in\{0,1\}^m}{M_{u,y}^1 \Psi(x)(M_{u,y}^1)^\dagger} \right)
-\Tr\left(M_{Ext(x,y),y}^1 \Psi(x)(M_{Ext(x,y),y}^1)^\dagger\right)\Big]\Big| \leq \eps 
\end{align*}

\paragraph{Quantum codes.}
A $(n,b)$ quantum encoding is a map $\Psi: \{0,1\}^n \rightarrow \mathcal{D}_b$. A fundamental theorem due to Holevo states that, for any fixed measurement $M$, the outcome of that measurement when performed on $\Psi(x)$ cannot contain more information about $x$ than a classical string of $b$ bits:

\begin{theorem} \cite{Hol73} \label{thm:holevo} 
Let $X$ be any distribution on $\{0,1\}^n$ and $\Psi(X)=\mathbb{E}_{x \in X} [\Psi(x)]$.  For a particular measurement $M$, let $Y_M$ denote the classical random variable resulting from applying the measurement on $\Psi(X)$. If $I(X:Y)$ denotes the mutual information of $X$ and $Y$ and $S(\Psi(X))$ denotes the von Neumann entropy of $\Psi(X)$, then $I(X:Y) \leq S(\Psi(X))$. 
\end{theorem}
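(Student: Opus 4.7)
The plan is to prove Holevo's bound through the Holevo $\chi$-quantity together with monotonicity of quantum mutual information under CPTP maps. First I would lift the setting to the classical-quantum state
\[ \rho_{XB} \;=\; \sum_x \Pr[X=x]\,\ket{x}\bra{x}_X \otimes \Psi(x)_B, \]
where the $X$-register records the input in a fixed orthonormal basis and the $B$-register holds the codeword. Because the $X$-register is diagonal, the joint von Neumann entropy decomposes as $S(\rho_{XB}) = S(\rho_X) + \sum_x \Pr[X=x]\,S(\Psi(x))$, with $S(\rho_X)$ equal to the Shannon entropy of $X$. Plugging this into the definition of quantum mutual information,
\[ I(X:B)_\rho \;=\; S(\rho_X) + S(\rho_B) - S(\rho_{XB}) \;=\; S(\Psi(X)) \;-\; \sum_x \Pr[X=x]\,S(\Psi(x)) \;=:\; \chi. \]
Since von Neumann entropy is non-negative, this already gives $\chi \leq S(\Psi(X))$.

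Next I would model the measurement $M$ with POVM elements $\{E_y\}$ as the CPTP map $\mathcal{E}: \sigma \mapsto \sum_y \Tr(E_y \sigma)\,\ket{y}\bra{y}$ acting on register $B$. Applying $\mathrm{id}_X \otimes \mathcal{E}$ to $\rho_{XB}$ yields a state diagonal in the basis $\{\ket{x}\ket{y}\}$ whose entries are precisely the joint distribution of $(X,Y_M)$; hence the quantum mutual information of the resulting classical-classical state coincides with the Shannon mutual information $I(X:Y_M)$.

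The core step is then the data-processing inequality $I(X:Y_M) \leq I(X:B)$ for the local CPTP map $\mathrm{id}_X \otimes \mathcal{E}$. I would obtain this by rewriting $I(X:B) = S(\rho_{XB} \,\|\, \rho_X \otimes \rho_B)$ and invoking the Lindblad–Uhlmann theorem on monotonicity of quantum relative entropy under CPTP maps, observing that the map sends $\rho_X \otimes \rho_B$ to $\rho_X \otimes \mathcal{E}(\rho_B)$. Chaining the inequalities gives $I(X:Y_M) \leq \chi \leq S(\Psi(X))$, as claimed. The hard part is the Lindblad–Uhlmann monotonicity itself, which is a deep result that I would cite rather than reprove; once it is granted, everything else reduces to algebraic bookkeeping for classical-quantum states.
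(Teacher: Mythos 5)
The paper cites Holevo's 1973 result without providing its own proof, so there is no internal argument to compare against. Your proof is correct and is the standard modern derivation: you form the classical-quantum state $\rho_{XB}$, use the block-diagonal entropy decomposition to identify $I(X:B)$ with the Holevo quantity $\chi = S(\Psi(X)) - \sum_x \Pr[X=x]\,S(\Psi(x))$, note that non-negativity of von Neumann entropy gives $\chi \le S(\Psi(X))$, and then model the measurement as a CPTP map on $B$ so that Lindblad--Uhlmann monotonicity of relative entropy yields $I(X:Y_M) \le I(X:B)$. Each step is sound; it is reasonable to take the monotonicity of quantum relative entropy as a citable black box, since that is where the real difficulty lives (Holevo's original 1973 argument did not use it, but essentially every contemporary textbook proof proceeds exactly as you do).
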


\paragraph{Oracle circuits.} Our proofs of security will involve the construction of oracle circuits. If $A$ is an oracle circuit, we denote by  $A^B$ the circuit that uses $B$ as the oracle. Further, let $C$ be an oracle machine which uses $A$ as an oracle (denoted by $C^A$). Then it is understood that when $C$ calls $A$, then $A$ calls the appropriate oracle $B$. Thus $C^A \equiv C^{A^B}$. We will say that a circuit $C:\{0,1\}^{n+t}\to \{0,1\}$ computes a function $f$ with $t$ bits of advice if  there is a string $a\in\{0,1\}^t$ such that for every $x\in\{0,1\}^n$, $C(x,a)=f(x)$.

We will use the following easy claim:

\begin{claim}\label{compo}
Let $B$ be any oracle such that oracle circuit $A$ can be constructed using at most $t_1$ bits of advice and $A $ queries $B$ at most $q_1$ times. Again let $C$ be an oracle circuit which queries $A$ and $C$ can be constructed using at most $t_2$ bits of advice. Further, $C$ queries $A$ at most $q_2$ times. Then $C$ can be considered as an oracle circuit which queries $B$ at most $q_1q_2$ times and can be constructed using at most $t_1+t_2$ bits of advice. 
\end{claim}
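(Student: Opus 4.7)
The plan is to give a direct construction: from the two given oracle circuits $A$ (with access to $B$) and $C$ (with access to $A$), I will build a single oracle circuit $C'$ that accesses only $B$, and verify directly that its advice length and query count satisfy the claimed bounds.

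First, let $a_1\in\{0,1\}^{t_1}$ be the advice string witnessing the hypothesis for $A^B$, and let $a_2\in\{0,1\}^{t_2}$ be the advice string witnessing the hypothesis for $C^A$. Define the combined advice string $a=(a_1,a_2)\in\{0,1\}^{t_1+t_2}$. The new oracle circuit $C'$ (with oracle $B$) is obtained by ``inlining'' $A$ inside $C$: $C'$ reads $a$, parses it into $(a_1,a_2)$, simulates $C$ using the advice $a_2$, and whenever $C$ would issue a query to the oracle $A$ on some input $z$, $C'$ instead runs the code of $A$ on input $z$ with advice $a_1$, forwarding each of $A$'s oracle queries to $B$ and returning to the simulation of $C$ with $A$'s output.

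For the query count, each simulated call to $A$ triggers at most $q_1$ queries to $B$ by hypothesis, and $C$ issues at most $q_2$ calls to $A$, so $C'$ makes at most $q_1 q_2$ queries to $B$. For the advice count, $C'$ is constructed using the single string $a$ of length $t_1+t_2$. This gives both required bounds.

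There is essentially no obstacle here; the whole content of the claim is the observation that advice and oracle queries compose multiplicatively in the natural way. The only thing to be careful about is being explicit that the ``$t$ bits of advice'' in Definition/usage earlier in the paper indeed refers to an extra input string that depends only on the target function (and not on the individual input $x$), so that concatenating $a_1$ and $a_2$ produces a legal single advice string for the combined construction. This follows directly from the definition of ``computes $f$ with $t$ bits of advice'' given just before the claim.
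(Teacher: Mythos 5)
Your proof is correct, and since the paper states this claim without proof (deeming it easy), the direct ``inlining'' construction you give is exactly the natural argument one would supply: concatenate the advice strings, simulate $C$ and expand each call to $A$ into at most $q_1$ calls to $B$, and count. Nothing further is needed.
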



\section{Quantum Functional Access Codes}\label{sec:qfac}

Consider the following problem from the theory of error-correcting codes. Let $C: \{0,1\}^n\rightarrow \{0,1\}^m$ be a code which is $(\eps,L)$ list-decodable i.e. for any $x \in \{0,1\}^m$, there are at most $L$ codewords $y$ such that $\Delta(x,y)\leq \frac{1}{2} - \eps$. Let $A = \{C(x):\ x\in\{0,1\}^n\}$ be the set of all codewords, and consider $Enc:A\rightarrow \{0,1\}^b$, a probabilistic encoding such that for every $z \in A$, $z_i$ can be recovered from $Enc(z)$ with probability $\frac{1}{2}+ 2\eps$, on average over the choice of $i\in[m]$. Given $Enc(z)$, by performing the recovery procedure for every index $i$, we obtain a string $y$ which will agrees with $z$ on at least a $\frac{1}{2} + \eps$ fraction of the positions with high probability. But then the exact element $z$ can be recovered using just an additional $\log |L|$ bits of advice (as per the list-decodability property of $C$). Hence, $Enc$ can be seen as a high-probability encoding of any codeword, using only $b+ \log |L|$ bits. However, the obvious information-theoretic bound shows that this must be at least $\log |C|$ bits, implying that $b \geq \log \frac{|C|}{|L|}$. This is  much better than the bound $b\geq O(\eps^2 \log|C|/\log n)$, for small $\eps$, that one gets if there is no guarantee on the structure of the set $C$ (see Theorem~3.2 in~\cite{TS09} for a proof).

\bigskip

To model this situation more precisely, note that the recovery procedure lets us recover any bit of $C(x)$ with non-trivial probability. As such, $Enc$ can be seen as a probabilistic encoding of every $x\in\{0,1\}^n$ which lets us evaluate a class of functions $\mathcal{C} = \{g_i : x\mapsto C(x)_i,\ i\in [n]\}$. This is a generalization of the usual random access codes, introduced in~\cite{ANTV02}, for which $\mathcal{C} = \mathcal{C}_1 = \{g_i :x\mapsto x_i,\, i\in [n]\}$.

It is natural to expect that lower bounds for this more demanding kind of random access code would be tighter than more general lower bounds, in a way that depends on the structure of $\mathcal{C}$. We introduce the following definition:

\begin{definition}
Let $A \subset \{0,1\}^n$, and $\mathcal{C} \subset \{f:A \rightarrow \{0,1\}\}$ a set of functions defined on $A$. For $\eps \in (0,1/2]$, a $(n,b,\eps)$ quantum functional access code, or QFAC, for $(A,\mathcal{C})$ is a map $\Psi: A \rightarrow \mathcal{D}_b$ such that, for every $f\in\mathcal{C}$, there is a measurement $M_f = \{M_f^0,M_f^1\}$ such that for every $x\in A$, $\Tr(M_f^{f(x)}\Psi(x)(M_f^{f(x)})^\dagger) \geq 1/2+\eps$. If this first property only holds on average over the choice of $f$, then we'll say that $\Psi$ is a $(n,b,\eps)$ QFAC \emph{on average} for $(A,\mathcal{C})$.
\end{definition}

The discussion above shows a strong lower bound on the length of any \emph{classical} functional access code for a set of functions $\mathcal{C}= \{g_i : x\mapsto C(x)_i,\ i\in [n]\}$ that is derived from a good list-decodable code $C$. However, the classical argument cannot be extended in a straightforward manner to the quantum case, as it is dependent upon performing successive measurements on the encoding. If the encoding is quantum, the first such measurement will destroy the state, and we will not be able to proceed further. 

Nevertheless, for some specific cases, such bounds follow from previously known results. We start with the standard setting of random access codes, for which Theorem~4.1 in~\cite{ANTV02} implies the following (see also Theorem~3.2 in~\cite{TS09}):

\begin{lemma}\label{qbound}
Let $A\subseteq \{0,1\}^n$ and $\eps\in(0,1/2]$ such that there exists a $(n,b,\eps)$ quantum functional access code for $(A,\mathcal{C}_1)$. Then $\log |A| \leq O\left(\frac{b\log n}{\eps^2} \right)$.\footnote{As noted in~\cite{TS09}, the loss of a factor $\log n$ is inevitable. Note however that this can be removed in the case where $A=\{0,1\}^n$ by following the proof for quantum random access codes in~\cite{ANTV02}.}
\end{lemma}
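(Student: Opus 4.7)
The plan is to reduce the statement to the standard QRAC lower bound of \cite{ANTV02} (Theorem~4.1), which asserts that any $(d,b,\eps)$-QRAC encoding all of $\{0,1\}^d$ must satisfy $b \geq (1-H(1/2+\eps))d = \Omega(\eps^2 d)$. The key observation is that, from a $(n,b,\eps)$-QFAC for $(A,\mathcal{C}_1)$, one can extract a QRAC on the full cube $\{0,1\}^d$ for some $d = \Omega(\log|A|/\log n)$, by restricting attention to a subset of coordinates on which $A$ is \emph{shattered}.

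First I would apply the Sauer--Shelah lemma to the family $A \subseteq \{0,1\}^n$: whenever $A$ has VC dimension at most $d-1$ one has $|A| \leq (en/d)^d$, so there must exist a subset $S \subseteq [n]$ of size $d = \Omega(\log|A|/\log n)$ for which the projection $A|_S := \{x|_S : x \in A\}$ equals $\{0,1\}^S$. Then, for each $u \in \{0,1\}^S$, I pick any representative $x_u \in A$ with $x_u|_S = u$ and define the encoding $\Phi(u) := \Psi(x_u) \in \mathcal{D}_b$. For each $j \in S$, the original QFAC measurement $M_j$ applied to $\Phi(u) = \Psi(x_u)$ returns the bit $x_{u,j} = u_j$ with probability at least $1/2+\eps$, so $\Phi$ is a genuine $(d,b,\eps)$-QRAC for $\{0,1\}^d$.

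Applying Theorem~4.1 of~\cite{ANTV02} to $\Phi$ gives $b = \Omega(\eps^2 d)$, and substituting the Sauer--Shelah lower bound $d = \Omega(\log|A|/\log n)$ yields the desired bound $\log|A| \leq O(b\log n/\eps^2)$. The main conceptual step --- and, I expect, the only non-trivial one --- is identifying Sauer--Shelah as the right combinatorial bridge from the subset setting to the full cube; after that, the quantum and counting ingredients are both off-the-shelf. The $\log n$ overhead arises entirely from the Sauer--Shelah step, matching the footnote's remark that this loss is inherent in the subset setting (and vanishes when $A=\{0,1\}^n$, since then the reduction is the identity).
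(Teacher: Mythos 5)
Your argument is correct. The paper itself does not write out a proof of Lemma~\ref{qbound}: it simply asserts that the bound is implied by Theorem~4.1 of~\cite{ANTV02} (Nayak's lower bound for QRACs over the full cube $\{0,1\}^d$) and refers to Theorem~3.2 of~\cite{TS09}, so there is no explicit argument in the paper to compare against. Your Sauer--Shelah step is a natural, self-contained way to supply exactly the missing reduction from a general $A\subseteq\{0,1\}^n$ down to the full-cube setting where Nayak's bound applies, and it localizes the $\log n$ loss in one place --- consistent with the footnote's remark that the loss is inevitable in general but vanishes when $A=\{0,1\}^n$ (there the shattered set is all of $[n]$ and the reduction is the identity). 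Two small points are worth handling explicitly, though neither affects correctness. First, when $\log|A|=O(\log n)$ the Sauer--Shelah step only yields a shattered set of size $O(1)$; this regime should be dispatched directly (if $|A|\geq 2$ then $\Psi$ must take two distinct values, forcing $b\geq 1$, and with $\eps\leq 1/2$ one has $b\log n/\eps^2\geq 4\log n$, so the claimed $O(\cdot)$ bound holds; if $|A|\leq 1$ the bound is vacuous). Second, Theorem~4.1 of~\cite{ANTV02} is usually stated with success probability averaged over the input, so you should note that your restricted QRAC achieves $1/2+\eps$ worst-case over $u\in\{0,1\}^d$ and hence satisfies the hypothesis a fortiori.
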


Central to this work is the fact that functional access codes for larger classes of functions than the simple coordinate functions $\mathcal{C}_1$ enjoy much stronger lower bounds, with a weaker dependence on the success probability $\eps$. K\"onig, Maurer and Renner~\cite{KMR05} show the following:

\begin{theorem}[\cite{KMR05}, Thm. 12 and Cor. 13] Let $\mathcal{C}$ be the set of all functions from $\{0,1\}^n$ to $\{0,1\}$. Then any $(n,b,\eps)$ QFAC on average for $(A,\mathcal{C})$ satisfies $\log |A| \leq b+2\log 1/2\eps$. Moreover, the same bound holds if $\mathcal{C}$ is any family of two-universal hash functions, and the decoding procedure is only required to be correct on average over the choice of $x\in A$.
\end{theorem}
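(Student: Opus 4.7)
My plan is to reduce both statements to the second one, since the uniform distribution over all functions $\{0,1\}^n \to \{0,1\}$ is itself a 2-universal family (for any $x\neq x'$, $f(x)$ and $f(x')$ are independent uniform bits), so that it suffices to prove the bound for an arbitrary 2-universal family $\mathcal{C}$ when decoding is required only on average over $x\in A$. Writing $\rho_x = \Psi(x)$ and letting $P_f = 2M_f^1 - I$ (a Hermitian operator with $-I \preceq P_f \preceq I$), a direct expansion shows that the success probability on input $x$ equals $\tfrac{1}{2} + \tfrac{1}{2}(2f(x)-1)\Tr(P_f \rho_x)$. Averaging over uniform $x\in A$ and setting $\bar{f}(x) := 2f(x)-1 \in \{-1,+1\}$, this gives
\begin{equation*}
p_f - \tfrac{1}{2} \;=\; \tfrac{1}{2|A|}\,\Tr\!\bigl(P_f\, Y_f\bigr), \qquad Y_f := \sum_{x\in A} \bar{f}(x)\,\rho_x.
\end{equation*}

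Next I would bound $|\Tr(P_f Y_f)| \leq \|P_f\|_\infty\,\|Y_f\|_1 \leq \|Y_f\|_1$, so that the QFAC assumption $\mathbb{E}_f[p_f] \geq \tfrac{1}{2}+\eps$ yields $\eps \leq \tfrac{1}{2|A|}\,\mathbb{E}_f \|Y_f\|_1$. The key analytic step is to relate the trace norm to the Hilbert--Schmidt norm via the dimension bound $\|Y_f\|_1 \leq \sqrt{2^b}\,\|Y_f\|_2$ (valid for any operator on a $2^b$-dimensional space), which after Cauchy--Schwarz gives $\mathbb{E}_f \|Y_f\|_1 \leq \sqrt{2^b\,\mathbb{E}_f \|Y_f\|_2^2}$.

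To control $\mathbb{E}_f \|Y_f\|_2^2$ I would expand
\begin{equation*}
\mathbb{E}_f \|Y_f\|_2^2 \;=\; \sum_{x,x'\in A} \mathbb{E}_f\bigl[\bar{f}(x)\bar{f}(x')\bigr]\,\Tr(\rho_x \rho_{x'}),
\end{equation*}
and invoke the 2-universal property: for $x\neq x'$, $\mathbb{E}_f[\bar f(x)\bar f(x')] = 2\Pr_f[f(x)=f(x')]-1 \leq 0$, while $\Tr(\rho_x\rho_{x'})\geq 0$ since both operators are positive semidefinite. Thus the off-diagonal contributions are non-positive and the sum is bounded by the diagonal, giving $\mathbb{E}_f \|Y_f\|_2^2 \leq \sum_x \Tr(\rho_x^2) \leq |A|$ (each $\rho_x$ has trace $1$, so $\Tr(\rho_x^2) \leq 1$).

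Combining everything yields $\eps \leq \tfrac{1}{2|A|}\sqrt{2^b\,|A|} = \tfrac{1}{2}\sqrt{2^b/|A|}$, which rearranges to $|A| \leq 2^b/(4\eps^2)$, i.e., $\log|A| \leq b + 2\log(1/(2\eps))$. The main subtlety I anticipate is ensuring that the definition of QFAC ``on average'' and the 2-universality are aligned: the former averages over $x$ (and here we also average over $f$), while the latter is a statement about collisions of $f$-values across pairs $x\neq x'$, and one must verify that the sign of the off-diagonal contribution in $\mathbb{E}_f\|Y_f\|_2^2$ really is controlled by 2-universality (rather than, say, a weaker pairwise independence). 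The positivity $\Tr(\rho_x\rho_{x'})\geq 0$ is what lets the one-sided 2-universal bound suffice.
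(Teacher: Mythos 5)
The paper cites this result directly from K\"onig--Maurer--Renner~\cite{KMR05} (Theorem 12 and Corollary 13) and does not supply its own proof, so there is no in-paper argument to compare against; I therefore checked your argument on its own terms. It is correct, and it is essentially the standard ``$\ell_2$-to-$\ell_1$'' quantum leftover hash lemma argument in the spirit of the cited work: pass to the $\pm1$-valued observable $P_f$, reduce to bounding $\mathbb{E}_f\|Y_f\|_1$ with $Y_f=\sum_{x\in A}\bar f(x)\rho_x$, apply $\|\cdot\|_1\le\sqrt{2^b}\,\|\cdot\|_2$ in dimension $2^b$ followed by Jensen, and let two-universality annihilate the cross terms in $\mathbb{E}_f\|Y_f\|_2^2$. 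Your reduction of the first claim to the second is sound: the uniform distribution over all Boolean functions is a two-universal family, and the QFAC-on-average condition (which the paper defines as holding for every $x\in A$ after averaging over $f$) certainly still holds after further averaging over $x\in A$. Your closing observation --- that the one-sided inequality $\Pr_f[f(x)=f(x')]\le 1/2$ is enough precisely because $\Tr(\rho_x\rho_{x'})\ge 0$ for positive semidefinite matrices --- is exactly the right reason two-universality, rather than exact pairwise independence, suffices.

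One small cosmetic correction: in the paper's convention a measurement is specified by operators with $\sum_a M_a^\dagger M_a = I$ and outcome probability $\Tr(M_a\rho M_a^\dagger)=\Tr\bigl((M_a^\dagger M_a)\rho\bigr)$, so $M_f^1$ itself need not be a POVM element. You should set $P_f := 2(M_f^1)^\dagger M_f^1 - I$ rather than $2M_f^1 - I$; since $(M_f^1)^\dagger M_f^1$ is positive and bounded above by $I$, $P_f$ is Hermitian with $-I\preceq P_f\preceq I$ and the rest of your calculation is unchanged.
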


There is an obvious connection between lower bounds on the length of QFACs and lower bounds on one-way quantum communication complexity, even though results in the latter setting usually do not focus on the error dependence as much as is needed for our applications. Nevertheless, the following bound easily follows from known results:

\begin{lemma} Let $\mathcal{C} = \{g_y:x\mapsto x\cdot y\mod 2,\ y\in\{0,1\}^n\}$. If there exists a $(n,b,\eps)$ QFAC for $(A,\mathcal{C})$, then  $\log |A|\leq b+ 2\log(1/2\eps)$.
\end{lemma}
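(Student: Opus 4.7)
The plan is to recognize $\mathcal{C}$ as a $2$-universal hash family and then directly invoke the KMR05 theorem quoted immediately above.

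First, I would verify that $\mathcal{C} = \{g_y : x \mapsto x \cdot y \bmod 2 \mid y \in \{0,1\}^n\}$ is a $2$-universal family of hash functions from $\{0,1\}^n$ to $\{0,1\}$. This is a one-line calculation: for any two distinct inputs $x, x' \in \{0,1\}^n$, the vector $x \oplus x'$ is nonzero, so exactly half of the $y \in \{0,1\}^n$ satisfy $(x \oplus x') \cdot y \equiv 0 \pmod 2$. Hence $\Pr_y[g_y(x) = g_y(x')] = 1/2$, which matches $1/|\mathrm{range}(g_y)|$, certifying $2$-universality.

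Next, I would note that the hypothesis of a $(n,b,\eps)$ QFAC for $(A,\mathcal{C})$ is strictly stronger than the hypothesis assumed by KMR05: a worst-case success guarantee over all $x \in A$ trivially implies success on average over uniform $x \in A$, for each fixed $f \in \mathcal{C}$. Moreover, the QFAC definition already permits a distinct measurement $M_f$ for each function, so the decoder structure aligns exactly with KMR05's setup for hash-function families.

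Finally, I would apply the quoted KMR05 theorem (their Theorem 12/Corollary 13) to the pair $(A,\mathcal{C})$ and conclude $\log|A| \leq b + 2\log(1/2\eps)$. I do not expect any real obstacle here: the lemma is essentially a corollary of the preceding result, and the only bit of work is the short $2$-universality verification. In particular, no fresh quantum information-theoretic argument is needed, nor do we need to go through the one-way quantum communication-complexity machinery that is often used for such bounds.
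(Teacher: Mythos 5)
Your proof is correct, but it takes a genuinely different route from the paper's. The paper's proof goes through one-way quantum communication complexity: it views the QFAC as a protocol for the inner-product problem, applies a reduction from Cleve--van Dam--Nielsen--Tapp to obtain a protocol that transmits $x$ itself with success probability $4\eps^2$, and then invokes the Nayak--Salzman strengthening of Holevo's bound to conclude $b \geq \log|A| - \log(1/4\eps^2)$. You instead observe that $\mathcal{C}=\{g_y\}$ is a $2$-universal hash family (the one-line check $\Pr_y[(x\oplus x')\cdot y = 0] = 1/2$ for $x\neq x'$ is exactly right; this is the classical Carter--Wegman example) and appeal directly to the K\"onig--Maurer--Renner theorem quoted immediately above, noting that the worst-case QFAC hypothesis implies the on-average hypothesis needed there. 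Both routes give the identical bound $\log|A|\leq b+2\log(1/2\eps)$. Yours is more economical, needing no new quantum machinery beyond what the paper has already stated; the paper's version, as the authors note explicitly, is chosen to illustrate the ``obvious connection between lower bounds on the length of QFACs and lower bounds on one-way quantum communication complexity,'' which your approach sidesteps. There is no gap in your argument.
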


\begin{proof} Note that any $(n,b,\eps)$ QFAC for $(A,\mathcal{C})$ implies a one-way quantum protocol for the communication problem in which Alice is given $x\in A$, Bob is given $y\in\{0,1\}^n$, and their goal is to output $x\cdot y \mod 2$. Using a reduction from~\cite{CDNT98}, any such protocol communicating $b$ qubits and succeeding with probability $1/2+\eps$ can be transformed into a protocol that sends any $x\in A$ to Bob, using $b$ qubits, with success probability $4\eps^2$. Theorem~1.1 in~\cite{NS06} then shows that $b \geq \log|A| - \log(1/4\eps^2)$.
\end{proof}

Families of two-universal hash functions over $\{0,1\}^n$, as well as the Hadamard code, both have size $\Omega(2^n)$, which makes them unsuitable for our purposes. Indeed, in our applications to extractors we will use the seed to select a few random functions from the family $\mathcal{C}$ and apply them to the source in order to obtain the output. However, using any of the last two function families  would require a seed of length linear in the source length, whereas we would like it to be poly-logarithmic.

Our main result relies on the fact, proved below, that there are no short QFACs for families of functions that are defined from list-decodable codes. This extends the discussion introducing this section to the case of quantum encodings, and in fact we will get essentially the same bound as stated there --- even though, as we argued was necessary, the proof will be very different. It will be useful to consider \emph{approximately} list-decodable codes, which we define as follows:

\begin{definition} Let $\eps,\delta>0$ and $L\in\N$. A code $C:\{0,1\}^N \rightarrow \{0,1\}^{\overline{N}}$ is $(\eps,\delta,L)$ approximately list-decodable if for every $x\in\{0,1\}^{\overline N}$, there exists at most $L$ strings $\{y_i\}_{i=1}^L \in \{0,1\}^N$, such that for any string $z\in\{0,1\}^N$ satisfying $\Delta(x,C(z)) < 1/2-\eps$, $\exists i \in \{1,\ldots,L\}$ such that $\Delta(z,y_i)\leq\delta$. If $C$ is $(\eps,0,L)$ approximately-list decodable then we simply say that $C$ is $(\eps,L)$ list-decodable.
\end{definition}

\begin{proposition}\label{prop:listqfac}
Let $\eps,\delta>0$ and $L\in\N$. Let $C:\{0,1\}^N\to\{0,1\}^{\overline N}$ be a $(\eps/2,\delta,L)$ approximately list-decodable code, and $\mathcal{C} = \{f_i:x\mapsto C(x)_i,\, i\in [\overline N]\}$. Let $A\subseteq \{0,1\}^N$, and suppose that there exists a $(N,b,\eps)$ QFAC on average for $(A,\mathcal{C})$. Then 
$$\log |A| < H(\delta)N+b +\log L +O(\log 1/\eps)$$
Moreover, this bound holds even when we only require the QFAC to have success probability $1/2+\eps$ on average over the choice of $x\in A$, instead of for all $x$. 
\end{proposition}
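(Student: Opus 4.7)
The plan is to show that the hypothesized QFAC would give rise to a quantum adversary against the one-bit strong extractor built from $C$, to transfer this to a classical adversary via the K\"onig--Terhal theorem~\cite{KT07}, and finally to derive a contradiction from a counting argument based on the list-decodability of $C$.

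More precisely, I would interpret the function $f(x,i) := C(x)_i$ as a one-bit strong extractor with source $x$ and seed $i\in[\overline N]$. The hypothesis that there exists an on-average $(N,b,\eps)$ QFAC for $(A,\mathcal{C})$ is exactly the statement that the quantum strategy which stores $\Psi(x)$ on $b$ qubits and performs the measurements $\{M_i^0,M_i^1\}$ predicts $f(x,i)$ with success $1/2+\eps$ averaged over a uniform $x\in A$ and $i\in[\overline N]$, and equivalently distinguishes the output of $f$ on the uniform source on $A$ from a uniform bit with advantage $\eps$ given $b$ qubits of storage and the seed. K\"onig--Terhal then produces a classical adversary with $b$ bits of storage whose advantage is the same up to an $O(\log 1/\eps)$ loss in the effective source min-entropy (or, in the error-loss formulation, a $\sqrt{\cdot}$-blowup in the error), a loss that will be absorbed into the final bound.

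Given such a classical adversary, described by a storage map $\psi: A \to \{0,1\}^b$ and a guessing function $g: \{0,1\}^b \times [\overline N] \to \{0,1\}$, I would apply a direct counting argument. For each storage value $s$, the guessed word $r_s := g(s,\cdot) \in \{0,1\}^{\overline N}$ can be list-decoded: by the $(\eps/2,\delta,L)$-approximate list-decodability of $C$, there exist $L$ centers $y_{s,1},\ldots,y_{s,L}$ such that every $x$ with $\Delta(r_s, C(x)) < 1/2-\eps/2$ satisfies $\Delta(x, y_{s,j}) \leq \delta$ for some $j$. Hence at most $L\cdot 2^{H(\delta)N}$ inputs per $s$ achieve agreement exceeding $1/2+\eps/2$ with $r_s$, while each remaining input contributes advantage at most $\eps/2$. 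Weighting by $1/|A|$ for the uniform source on $A$ and summing over the $2^b$ possible storage values, the classical advantage is bounded by $\eps/2 + 2^{b+\log L + H(\delta)N}/|A|$. For this to match the $\Omega(\eps)$ classical advantage provided by K\"onig--Terhal, one needs $\log|A| < H(\delta)N + b + \log L + O(\log 1/\eps)$, which is the desired inequality.

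The main technical obstacle is ensuring that all the parameter losses in the K\"onig--Terhal reduction and in the Markov-type averaging are small enough to be absorbed into the single $O(\log 1/\eps)$ slack in the conclusion; in particular, the classical advantage after the transfer must stay comparable to $\eps$ so that the counting bound above can be made larger than the $\eps/2$ contribution of the non-list-decodable inputs. This bookkeeping is the sole source of the $O(\log 1/\eps)$ additive factor in the final statement.
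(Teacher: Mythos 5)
Your proposal is correct and uses exactly the same three ingredients as the paper's proof---the one-bit extractor $E(x,i)=C(x)_i$ built from $C$, the K\"onig--Terhal transfer theorem~\cite{KT07}, and the counting/list-decoding argument (the paper's Claim~\ref{claim:one-bit})---merely organised in the contrapositive direction: you start from the QFAC, apply K\"onig--Terhal to obtain a classical adversary, and then run the counting argument, whereas the paper first proves $E$ is a classical strong extractor via the counting argument and then invokes K\"onig--Terhal forward to conclude quantum security, which the QFAC then contradicts. The only cosmetic difference is that you phrase the counting as a sum over the $2^b$ cells of a classical storage map, while the paper bounds the size of the bad set for a single classical test against a min-entropy-$K$ source (the two pictures being equivalent by conditioning on the storage value), and your bookkeeping worry about the $\sqrt{\cdot}$ error loss in K\"onig--Terhal is precisely what the paper's choice $\eta=\Theta(\eps^2)$ absorbs into the $O(\log 1/\eps)$ slack.
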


The proof crucially relies on the result by K\"onig and Terhal~\cite{KT07} that strong one-bit extractors are automatically safe against quantum adversaries, in some range of parameters. It proceeds through the following three steps:
 \begin{enumerate}
 \item Show that any $(\eps,\delta,L)$ approximately list-decodable code $C$ defines a good $1$-bit classical strong extractor.
 \item Use Theorem~III.1 from~\cite{KT07} to show that the previous extractor is automatically safe against quantum adversaries that are allowed a bounded amount of storage. 
 \item Conclude by showing how the security against quantum storage implies a lower bound on any QFAC on average for $(A,\mathcal{C})$.
 \end{enumerate}
We proceed with the details. 

\begin{proof}
Let $t = \log \overline N$ (assume it an integer for simplicity) and consider the following $1$-bit extractor 
\begin{align*}
E:~&\{0,1\}^N\times \{0,1\}^{t} ~\rightarrow ~\{0,1\}\\
&~~~~~~~(x,y) ~~~~~~~~~\mapsto~  C(x)_y
\end{align*}

The following claim proves item 1 above.  

\begin{claim}\label{claim:one-bit}
$E:\{0,1\}^N \times \{0,1\}^t \rightarrow \{0,1\}$ as defined above is a $(K, \eps)$ strong extractor for any $K > H(\delta)N + \log L + \log \frac{2}{\eps}$.
\end{claim}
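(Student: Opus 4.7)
The plan is to prove the claim by contraposition: assume that $E$ fails to be a $(K,\eps)$ strong extractor for some $K$-source $X$, and deduce an upper bound on $K$ that contradicts the hypothesis.

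First, I would unpack the statistical-distance condition. A direct computation shows that
\[
\|(U_t, E(X,U_t)) - (U_t, U_1)\| \;=\; \mathbb{E}_{y\sim U_t}\Big[\,\big|\Pr_{x\sim X}[C(x)_y=1] - \tfrac{1}{2}\big|\,\Big].
\]
If this quantity exceeds $\eps$, then defining $z\in\{0,1\}^{\overline N}$ by $z_y := \arg\max_{b\in\{0,1\}}\Pr_X[C(X)_y=b]$ yields a fixed string such that $\mathbb{E}_y[\Pr_X[C(X)_y=z_y]] > 1/2 + \eps$. Swapping the order of expectation and rewriting in terms of Hamming distance gives $\mathbb{E}_{x\sim X}[\Delta(C(x),z)] < 1/2 - \eps$.

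Next, I would convert this average bound into a pointwise statement for a noticeable fraction of the source. Since $\Delta(C(x),z)\in[0,1]$, a one-sided Markov argument applied to $1-2\Delta(C(X),z)$ shows that
\[
\Pr_{x\sim X}\!\big[\Delta(C(x),z) < 1/2 - \eps/2\big] \;\geq\; \frac{\eps}{1-\eps} \;\geq\; \eps/2.
\]
Let $S\subseteq\{0,1\}^N$ be this ``good'' set. Now I invoke approximate list-decodability of $C$ with parameter $\eps/2$: for the fixed center $z$, there exist at most $L$ strings $y_1,\ldots,y_L$ such that every $x$ with $\Delta(C(x),z) < 1/2-\eps/2$ lies in some Hamming ball $B(y_i,\delta)$. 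The standard volume estimate $|B(y_i,\delta)|\leq 2^{H(\delta)N}$ therefore gives $|S|\leq L\cdot 2^{H(\delta)N}$.

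Finally, I would combine the two estimates using the min-entropy hypothesis: $\Pr_X[S] \leq |S|\cdot 2^{-K} \leq L\cdot 2^{H(\delta)N - K}$. Chaining with the lower bound from Markov yields
\[
\eps/2 \;\leq\; L\cdot 2^{H(\delta)N - K},
\]
i.e.\ $K \leq H(\delta)N + \log L + \log(2/\eps)$, contradicting the assumption that $K > H(\delta)N + \log L + \log(2/\eps)$. There is no real obstacle here: the only delicate point is keeping track of the constants when going from the average statement $\mathbb{E}_x[\Delta(C(x),z)]<1/2-\eps$ to a pointwise statement at threshold $1/2-\eps/2$, where one must be slightly careful to lose only a factor of $2$ in both the probability and the distance parameter so that the final constants match the claim.
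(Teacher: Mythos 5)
Your proposal is correct and follows essentially the same route as the paper: both prove the claim by contradiction, convert a distinguisher into a single fixed string (your majority string $z$ is the paper's $x'$ obtained by evaluating the optimal test $T'$ on all seeds), run a Markov argument to get a pointwise agreement $1/2+\eps/2$ for an $\eps/2$-fraction of the source, and then invoke $(\eps/2,\delta,L)$ approximate list-decodability plus the Hamming-ball volume bound $2^{H(\delta)N}$ to bound the size of the good set. The only cosmetic difference is that you construct the optimal test directly rather than positing an arbitrary test and flipping its output.
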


\begin{proof}
Assume for the sake of contradiction that $E$ is not a $(K, \eps)$ strong extractor. Then there is a distribution $D$ with min-entropy $K$, and a statistical test $T$ such that the following holds. 
\[ |\Pr_{y \sim U_t, x \sim D} [T(y)=C(x)_y]- \frac{1}{2} | \geq  \eps \]
With a possible flip in the output of circuit $T$, we get a new test $T'$ such that 
\[ \Pr_{y \sim U_t, x \sim D} [T'(y)=C(x)_y]\geq \frac{1}{2}+ \eps \]
By a Markov argument, there is a set $BAD\subseteq  \{0,1\}^N$ such that for every $x \in BAD$, 
\[ \Pr_{y \sim U_t} [T'(y)=C(x)_y]\geq \frac{1}{2}+ \frac{\eps}{2} \]
and $\Pr_{x \sim D} [x \in BAD] \geq \eps/2$. Evaluating $T'$ on every possible $y \in \{0,1\}^t$ results in a string $x'$ such that 
\begin{equation}\label{eq:xor-match}
 \Pr_{y \in \{0,1\}^t} [x'_{y} = C(x)_{y} ] \geq \frac{1}{2} + \frac{\eps}{2}
\end{equation}
We can now use the $(\eps/2,\delta,L)$ list-decodability properties of $C$. For any $x'$  satisfying (\ref{eq:xor-match}) we can get a set of $k\leq L$ strings  $x^1,\ldots,x^k$ such that at least one of them satisfies that 
\begin{equation}\label{eq-close}
\Pr_{y \sim \mathcal{U}_N} [x^i_y=x_y] \geq 1-\delta
\end{equation}
Note that process of finding $x^1,\ldots,x^k$ need not be polynomial time, but we only require existence here; the important point is that the list of $x^i$ is uniquely determined by $x'$ (take the lexicographically smallest list satisfying the conditions in the fact). If $x^1,\ldots,x^k$ are known, then we require at most $\log L$ bits to specify $i \in [t]$ such that $x^i$ satisfies (\ref{eq-close}). Once $x^i$ is specified, we know that $x$ must be among one of the at most $2^{H(\delta) N}$ possible $N$-bit strings which are $\delta$-close to $x$. Hence  we require an additional $H(\delta) N$ bits to fully specify $x$. Thus, the total amount of bits used to specify $x$  is $\log L + H(\delta)N$, which in turn implies that the size of the set $BAD$ is bounded by $L\cdot 2^{H(\delta)N}$. 

To conclude the argument, observe that every element in BAD is sampled with probability at most $2^{-K}$  and hence $\Pr_{X \in D}[X \in BAD] \leq (L\cdot 2^{-K + H(\delta) N})$. However, this is a contradiction if
\[ L\cdot 2^{-K + H(\delta) N} < \frac{\eps}{2} ~~~~~ \textrm{i.e.} ~~~~~ K > H(\delta)N + \log L + \log \frac{2}{\eps} \]
which gives the bound stated in the claim.
\end{proof}

Let $\eta>0$ be an error parameter, $A\subseteq \{0,1\}^N$, and $\mathcal{U}_A$ the uniform distribution on $A$. Theorem III.1 in~\cite{KT07} implies that, as long as
\begin{align}\label{eq:sizeA}
 \log |A| - b \geq K + \log 1/\eta,
\end{align}
the function $E$ is automatically a $(\log |A|, 3\sqrt{\eta})$ extractor that is secure against $b$ qubits of quantum storage (see Definition~\ref{def:strongquantum}). This means that, for any collection of quantum states $\Psi(x)\in \mathcal{D}_b$, knowledge of $y$ and $\Psi(x)$ cannot help distinguish $E(x,y)$ from a uniformly random bit with advantage more than $3\sqrt{\eta}$ (over the choice of $x$ in $A$, and uniform $y$). In particular, we have that for any collection of measurements $\{M_y^0,M_y^1\}_{y\in\{0,1\}^t}$ on $\mathcal{D}_b$, 
$$\Exs{x\in A,\,y\in \{0,1\}^t}{\Tr(M^{C(x)_y}_y \Psi(x)(M^{C(x)_y}_y)^\dagger) } \leq 1/2+3\sqrt{\eta}/2$$
By definition, any $(N,b,\eps)$ QFAC on average for $(A,\mathcal{C})$, even one that is only correct on average over the choice of $x$, contradicts this conclusion for $\eta = 4\eps^2/9$. Hence our assumption \eqref{eq:sizeA} on the size of $A$ must be contradicted, i.e. any such QFAC must be such that $\log |A| < K + b + \log 9/4\eps^2$. Setting $K$ to be the smallest possible value satisfying the condition in Claim~\ref{claim:one-bit}, we get
$$\log |A| < H(\delta)N+b +\log L +O(\log 1/\eps)$$
\end{proof}

We describe two instantiations of this proposition, for specific families of codes. The first one, which will let us get an extractor with optimal seed length, is based on the following from~\cite{GHSZ00}:
\begin{fact}\label{fact:lcode} For any $N\in\N$, $\eps>0$, there exists a polynomial-time computable code $C_R:\{0,1\}^N \rightarrow \{0,1\}^{\overline N}$, where $\overline N = O(N/\eps^4)$, that is $(\eps,O(1/\eps^2))$ list-decodable. 
\end{fact}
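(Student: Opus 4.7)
The plan is to construct $C_R$ by concatenating a Reed--Solomon outer code with a Hadamard inner code, following the construction of~\cite{GHSZ00}. Fix a power-of-two field size $q$ to be tuned below, set $k=\lceil N/\log q\rceil$, view $x\in\{0,1\}^N$ as $k$ symbols in $\F_q$, and let the outer RS codeword be the evaluations at all points of $\F_q$ of the associated degree-$(k-1)$ polynomial. Each outer symbol is then expanded to a $q$-bit string by the Hadamard code $\mathrm{Had}:\F_q\to\{0,1\}^q$. This gives a block length of $\overline N = q^2$, and the outer rate $k/q$ is the essential free parameter that needs to be tuned against $\eps$.

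List decoding proceeds in two stages. Given a received word within relative distance $1/2-\eps$ of $C_R(x)$, I would first partition it into the $q$ length-$q$ blocks corresponding to the outer positions, and for each block run the Goldreich--Levin list decoder for Hadamard at radius $1/2-\eps/2$: this returns a list of $\ell = O(1/\eps^2)$ candidate $\F_q$-symbols per outer position. A Markov argument then shows that on at least an $\Omega(\eps)$ fraction of the outer positions the true Reed--Solomon symbol is guaranteed to appear in the corresponding list. In the second stage I would invoke the Guruswami--Sudan soft-decision list recovery algorithm for the outer Reed--Solomon code, which outputs every RS message consistent with at least a $\sqrt{\ell\cdot k/q}$ fraction of the candidate lists; picking $k/q$ polynomially small in $\eps$ places this threshold below the $\Omega(\eps)$ agreement available, so every valid $x$ is recovered. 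The final list size of $O(1/\eps^2)$ follows either from the Johnson bound applied to the concatenated code (whose relative distance is close to $1/2$ because $k/q$ is small) or directly from the Guruswami--Sudan bound on output list size, and both stages run in $\poly(N,1/\eps)$ time.

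The main obstacle is the joint tuning of $q$, $k/q$, and the inner decoding radius so that three conditions hold simultaneously: (i) the Reed--Solomon list recovery succeeds at radius $\Omega(\eps)$, (ii) the output list size stays at $O(1/\eps^2)$, and (iii) the block length $q^2$ fits within the target $O(N/\eps^4)$. Conditions (i) and (ii) together force a choice of $k/q$ polynomially small in $\eps$, and it is exactly this rate loss that produces the $\eps^{-4}$ blowup in $\overline N$; checking that the resulting $q$ can be taken to be $\poly(N,1/\eps)$ while simultaneously respecting all three constraints is the technically delicate bookkeeping step, and is where one has to borrow the specific parameter choices from~\cite{GHSZ00} rather than re-derive them from scratch.
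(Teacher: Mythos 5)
The paper does not prove Fact~\ref{fact:lcode}: it is quoted directly from~\cite{GHSZ00}, so there is no internal argument to compare against. The serious problem with your reconstruction is the block length, and no amount of tuning the outer rate $k/q$ can repair it. A full-length Reed--Solomon code over $\F_q$ concatenated with the Hadamard code has $\overline N = q^2$, and the encoding must accommodate $N = k\log q$ bits with $k\leq q$, so $q\log q \geq N$ always, giving $\overline N = q^2 = \Omega\bigl(N^2/\log^2 N\bigr)$. This is quadratic in $N$ regardless of $\eps$. Put differently, the Hadamard inner code has rate $\log q/q$, which tends to $0$ as $N\to\infty$, whereas the Fact requires a code of rate $\Omega(\eps^4)$ independent of $N$. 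Indeed, insisting on $\overline N = q^2 = O(N/\eps^4)$ under the constraint $N\le q\log q$ forces $q = O(\eps^{-2}\log q)$, i.e. $q$ (and hence $N$) bounded by a function of $\eps$ alone. To reach near-linear block length one must swap the Hadamard inner code for a \emph{short} inner code whose rate does not vanish with $q$ --- for instance a code on a block of length $O(\eps^{-\Theta(1)}\log q)$ meeting the Gilbert--Varshamov bound, found by exhaustive search --- and it is this kind of concatenation (not RS$\,\circ\,$Hadamard) that~\cite{GHSZ00} analyzes to get the stated rate. Your ``technically delicate bookkeeping'' remark is therefore not a defensible deferral: the bookkeeping you would attempt demonstrably fails.

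Two secondary points. First, the Fact only requires \emph{combinatorial} $(\eps, O(1/\eps^2))$-list-decodability, i.e. a bound on the number of codewords in a ball; the whole content of~\cite{GHSZ00} is a Johnson-type bound for concatenated codes, and importing the Goldreich--Levin and Guruswami--Sudan \emph{algorithms} overshoots what is asked and, more importantly, gives worse parameters: at the agreement threshold $\sqrt{\ell\cdot k/q}$ you quote, the Guruswami--Sudan list-recovery output size is not $O(1/\eps^2)$ (it degrades with the inner list size $\ell$ and the small outer rate), so you would in any case need to fall back on the Johnson bound to get the claimed list size. Second, for the paper's actual use of the Fact in Corollary~\ref{cor:lcode}, only $\log\overline N = O(\log N)$ matters once $\eps = N^{-c}$, so the quadratic-versus-linear gap does not threaten the paper's results --- but it does mean the proof of the Fact as stated is not the straightforward RS$\,\circ\,$Hadamard concatenation you describe.
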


These codes lead to the following, the proof of which follows immediately from Proposition~\ref{prop:listqfac}:

\begin{corollary}\label{cor:reedmullerqfac} Let $C_R$ be the code from Fact~\ref{fact:lcode}, and $\mathcal{C}_R = \{f_i:x\mapsto C(x)_i,\, i\in[\overline N]\}$. Then any $(N,b,\eps)$ QFAC on average for $(A,\mathcal{C}_R)$ is such that
$$\log |A| < b + O(\log 1/\eps)$$
Moreover, this bound holds even when we only require the QFAC to have success probability $1/2+\eps$ on average over the choice of $x\in A$. 
\end{corollary}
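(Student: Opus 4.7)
The plan is simply to instantiate Proposition~\ref{prop:listqfac} with the code provided by Fact~\ref{fact:lcode}. The only real work is to verify that the parameters line up so that the general bound from the proposition collapses into the clean form $b + O(\log 1/\eps)$.

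First, I would observe that Proposition~\ref{prop:listqfac} requires a $(\eps/2, \delta, L)$ approximately list-decodable code when the QFAC has success parameter $1/2 + \eps$. Accordingly, I would apply Fact~\ref{fact:lcode} with its internal parameter set to $\eps/2$, producing a polynomial-time computable code $C_R: \{0,1\}^N \to \{0,1\}^{\overline N}$ that is $(\eps/2, O(1/\eps^2))$ list-decodable. This is equivalently $(\eps/2, 0, O(1/\eps^2))$ approximately list-decodable, so in the notation of Proposition~\ref{prop:listqfac} we have $\delta = 0$ and $L = O(1/\eps^2)$. Note that the underlying set of functions $\mathcal{C}_R$ used in the corollary is exactly the one that the proposition associates with $C_R$.

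Next I would substitute these values into the bound of Proposition~\ref{prop:listqfac}. Since the convention fixed in Section~\ref{sec:prelim} sets $H(0) = 0$, the entropy penalty $H(\delta) N$ vanishes entirely. The $\log L$ contribution is $\log O(1/\eps^2) = O(\log 1/\eps)$, which is absorbed into the existing $O(\log 1/\eps)$ slack in the proposition's conclusion. This immediately yields $\log |A| < b + O(\log 1/\eps)$, as claimed. The ``on average over $x \in A$'' strengthening is inherited verbatim from the corresponding clause of Proposition~\ref{prop:listqfac} and requires no additional argument.

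There is no meaningful technical obstacle here: all the substantive work -- the reduction through the one-bit strong extractor induced by $C_R$ and the appeal to the K\"onig--Terhal theorem that automatically upgrades it to security against quantum storage -- has already been carried out in the proof of Proposition~\ref{prop:listqfac}. The corollary is purely a parameter-matching exercise, whose only point of attention is keeping track of the factor of two between the QFAC success parameter and the list-decoding radius that the proposition expects.
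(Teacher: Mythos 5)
Your proposal is correct and follows exactly the same route as the paper, which simply states that the corollary ``follows immediately from Proposition~\ref{prop:listqfac}.'' You spell out the parameter substitution ($\delta = 0$, $L = O(1/\eps^2)$, and the factor-of-two adjustment in the list-decoding radius) that the paper leaves implicit, and the arithmetic checks out.
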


Our second main construction uses a QFAC for the class $\mathcal{C}_k = \{g:x\mapsto \bigoplus_{j=1}^k  x_{i_j},\ (i_1,\ldots,i_k)\in [n]\}$. QFACs for this class of functions were introduced in~\cite{ARW08}, where they are called XOR-QRACs. That paper shows a bound on the length of such codes using a generalization of the hypercontractive inequality to matrix-valued functions. We improve their result by showing the following:

\begin{corollary}\label{cor:xorqfac} Let $k,N$ be integers, and $\eps>2k^2/2^N$. Let $A\subset \{0,1\}^N$. If there exists a $(N,b,\eps)$ QFAC on average for $(A,\mathcal{C}_k)$, then
$$\log |A| < b+H\left(\frac{1}{k}\ln\frac{2}{\eps}\right) \, N + O\left(\log \frac{1}{\eps}\right) $$
Moreover, this bound holds even when we only require the QFAC to have success probability $1/2+\eps$ on average over the choice of $x\in A$. 
\end{corollary}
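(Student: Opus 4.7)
The plan is to apply Proposition~\ref{prop:listqfac} to the natural $k$-XOR code $C_k:\{0,1\}^N \to \{0,1\}^{\overline N}$, with $\overline N = N^k$, defined by $C_k(x)_{(i_1,\ldots,i_k)} = x_{i_1}\oplus \cdots \oplus x_{i_k}$. The class $\mathcal{C}_k$ coincides with the coordinate-function class $\{f_I: x \mapsto C_k(x)_I\}_{I \in [N]^k}$ associated to this code, so an $(N,b,\eps)$ QFAC on average for $(A,\mathcal{C}_k)$ is precisely the kind of object governed by Proposition~\ref{prop:listqfac}. It therefore suffices to show that $C_k$ is $(\eps/2,\delta,L)$ approximately list-decodable with $\delta = \frac{1}{k}\ln(2/\eps)$ and $L = O(1/\eps^2)$, after which the proposition yields $\log |A| < H(\delta) N + b + \log L + O(\log 1/\eps)$ and the $\log L$ term is absorbed into $O(\log 1/\eps)$.

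The main technical step is the list-decoding bound for $C_k$, which I will establish by a Plotkin/Johnson-type $L_2$ argument. Passing to the $\pm 1$ representation, a direct coordinate-wise computation gives the key identity $\langle C_k(x), C_k(y)\rangle = (1-2\Delta(x,y))^k\, N^k$. Fix a received word $w \in \{\pm 1\}^{\overline N}$ and let $z_1,\ldots,z_L$ be a maximal packing of inputs satisfying $\langle w, C_k(z_i)\rangle > \eps N^k$ (i.e.\ $\Delta(w,C_k(z_i)) < 1/2 - \eps/2$) at pairwise Hamming distance strictly greater than $\delta$. Writing $v = \sum_i C_k(z_i)$, Cauchy--Schwarz gives $|v|^2 \geq \langle w, v\rangle^2 / |w|^2 \geq L^2 \eps^2 N^k$, while
\[
|v|^2 \;=\; \sum_{i,j}\langle C_k(z_i), C_k(z_j)\rangle \;\leq\; L N^k + L(L-1)(1-2\delta)^k N^k.
\]
Choosing $\delta = \frac{1}{k}\ln(2/\eps)$ yields $(1-2\delta)^k \leq e^{-2k\delta} = \eps^2/4$, so the two bounds combine to give $L^2\eps^2 \leq L + L^2\eps^2/4$, hence $L \leq O(1/\eps^2)$. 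By maximality every $z$ with $\Delta(w,C_k(z)) < 1/2 - \eps/2$ lies within Hamming distance $\delta$ of some $z_i$, proving the approximate list-decoding property.

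The main obstacle is controlling the sign of $(1-2\Delta(z_i,z_j))^k$: for $k$ odd the function is monotone decreasing in $\Delta$, so $\Delta(z_i,z_j) > \delta$ immediately yields the desired inequality, but for $k$ even the function rebounds to $1$ as $\Delta \to 1$, so near-antipodal pairs must be treated carefully. This is handled by quotienting by the equivalence $z \sim \bar z$ when $k$ is even (on which $C_k$ is constant), equivalently requiring the packing to satisfy $|1 - 2\Delta(z_i,z_j)| < 1 - 2\delta$; the mild assumption $\eps > 2k^2/2^N$ is precisely what is needed to guarantee $\delta < 1/2$ and to rule out degenerate packings in which distinct inputs produce effectively identical codewords. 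Once the list-decoding parameters $\delta = \frac{1}{k}\ln(2/\eps)$ and $L = O(1/\eps^2)$ are in hand, plugging them into Proposition~\ref{prop:listqfac} yields the claimed bound $\log |A| < b + H\!\left(\tfrac{1}{k}\ln(2/\eps)\right)N + O(\log(1/\eps))$.
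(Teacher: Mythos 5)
Your proposal follows the same top-level route as the paper --- reduce to Proposition~\ref{prop:listqfac} by exhibiting approximate list-decoding parameters $(\delta,L) = (\tfrac{1}{k}\ln(2/\eps),\, O(1/\eps^2))$ for the $k$-XOR code --- but you replace the paper's citation of Lemma 42 in~\cite{IJK06} with a direct and self-contained Plotkin/Johnson-style $L_2$ packing argument. The key identity $\langle C_k(x),C_k(y)\rangle = (1-2\Delta(x,y))^k\overline N$ (exact over tuples, approximate over sets) does give the Johnson bound on the packing size after the substitution $(1-2\delta)^k\leq e^{-2k\delta}=\eps^2/4$, and your handling of the $k$-even degeneracy (quotient by $z\sim\bar z$, pack with $\Delta(z_i,z_j)\in(\delta,1-\delta)$, then include both $z_i$ and $\bar z_i$ in the output list, doubling $L$) is the right fix. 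This buys a fully elementary proof of the list-decoding lemma in exchange for a slightly more delicate packing argument; the paper's route simply outsources this to~\cite{IJK06} and then has to remark on the tuple-vs-set translation, a remark your argument also implicitly requires since you compute over $[N]^k$ while the paper's $C_k$ is indexed by $\binom{[N]}{k}$.

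One inaccuracy worth flagging: you claim the hypothesis $\eps>2k^2/2^N$ is ``precisely what is needed to guarantee $\delta<1/2$.'' That is not right --- $\delta=\tfrac{1}{k}\ln(2/\eps)<1/2$ is equivalent to $\eps>2e^{-k/2}$, which is independent of $N$ and not implied by $\eps>2k^2/2^N$ in general. In the paper this hypothesis is inherited from the statement of~\cite{IJK06}'s lemma (where it controls the tuple-to-set passage and concentration of the sampling), not from a $\delta<1/2$ requirement. In your packing argument nothing breaks if $\delta\geq 1/2$; the resulting bound $\log|A|< b+H(\delta)N+O(\log 1/\eps)$ just becomes vacuous since $H(\delta)N$ already dominates $N\geq\log|A|$. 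So the conclusion still holds as stated, but the stated justification for the hypothesis should be corrected or dropped.
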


By generalizing the proof of Theorem~7 in~\cite{ARW08} (which is only stated for $A = \{0,1\}^N$ in that paper), we can get the bound $\log |A| \leq b+\left(1-\frac{1}{2\ln2}(2\eps)^{2/k}+o_N(1)\right)N $ for all $k\geq \log\log N$. This would lead to an extractor construction which only works for sources with min-entropy $\gamma N$ for $\gamma > 0.28$, and our improvement on their bound gets rid of this constraint. 

\begin{proof}
The following lemma (for a reference, see \cite{IJK06}, Lemma 42) shows that for any $\eps>2k^2/2^N$, the XOR code is $(\eps,(1/k) \ln (2/\eps),4/\eps^2)$ approximately list-decodable.

\begin{lemma}
For every $\eps>2k^2/2^N$ and $z' \in (\{0,1\}^N)^k$, there is a list of $t \leq 4/\eps^2$ elements $x^1,\ldots,x^t \in \{0,1\}^N$ such that the following holds:
for every $z \in \{0,1\}^N$ which satisfies 
$$  \Pr_{\{y_1,\ldots,y_k\} \in \binom{N}{k}} [z'_{(y_1,\ldots,y_k)} = \oplus_{i=1}^k z_{y_i} ] \geq \frac{1}{2} + \eps  $$
  there is an $i \in [t]$ such that $$\Pr_{y \sim \mathcal{U}_N} [x^i_y = z_y] \geq 1-\delta$$
with $\delta = (1/k) \ln (2/\eps)$. 
\end{lemma}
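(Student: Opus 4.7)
The plan is to prove this approximate list-decoding statement via a net-plus-Plotkin argument: extract cluster representatives from the set of ``good'' messages, then bound the number of clusters by exploiting how Hamming distance in message space amplifies under $k$-XOR encoding. Concretely, let $C$ denote the $k$-XOR encoding $C(z)_{(y_1,\ldots,y_k)} = \oplus_{i=1}^k z_{y_i}$, and let $\mathcal{S} \subseteq \{0,1\}^N$ be the set of all $z$ satisfying the agreement condition with $z'$. I would construct $x^1,\ldots,x^t$ greedily as a maximal $\delta$-separated subset of $\mathcal{S}$: pick any element for $x^1$, then iteratively add any $z \in \mathcal{S}$ at Hamming distance greater than $\delta$ from all previously chosen $x^i$, until no such element remains. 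By maximality every $z \in \mathcal{S}$ is $\delta$-close to some $x^i$, so it suffices to bound $t$ by $4/\eps^2$.

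The heart of the argument is the distance-amplification calculation. If $\Delta(x^i,x^j) = \gamma$, then a uniformly random $k$-subset hits an odd number of differing coordinates with probability $(1-(1-2\gamma)^k)/2$, so (up to lower-order terms coming from the $\binom{N}{k}$ versus $N^k$ discrepancy, which is where the hypothesis $\eps > 2k^2/2^N$ enters) one has $\Delta(C(x^i),C(x^j)) = 1/2 - (1-2\gamma)^k/2$. Since $\gamma > \delta = (1/k)\ln(2/\eps)$, this gives $(1-2\gamma)^k < e^{-2k\delta} = \eps^2/4$, so pairwise codeword distances exceed $1/2 - \eps^2/8$. Simultaneously, the agreement assumption forces $\Delta(C(x^i),z') \leq 1/2 - \eps$ for every $i$.

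The list-size bound $t \leq 4/\eps^2$ then follows from a standard spherical argument. Passing to the $\pm 1$ representation with $w^i = (-1)^{C(x^i)}$ and $w' = (-1)^{z'}$ in $\{\pm1\}^{\binom{N}{k}}$, the previous paragraph gives $\langle w^i,w^j\rangle/\binom{N}{k} < \eps^2/4$ for $i \neq j$ and $\langle w^i,w'\rangle/\binom{N}{k} \geq 2\eps$ for every $i$. Applying Cauchy--Schwarz to $\langle \mu,w'\rangle$ where $\mu = (1/t)\sum_i w^i$, one obtains
\[
4\eps^2 \;\leq\; \frac{\langle \mu,w'\rangle^2}{\binom{N}{k}^2} \;\leq\; \frac{\|\mu\|^2}{\binom{N}{k}} \;<\; \frac{1}{t} + \frac{\eps^2}{4},
\]
which rearranges to $t < 4/\eps^2$ as required.

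The main obstacle will be controlling the additive error in the distance-amplification step: the identity $\Delta(C(x^i),C(x^j)) = 1/2 - (1-2\gamma)^k/2$ holds only when sampling $k$-tuples with replacement, and switching to $k$-subsets introduces a correction of order $k^2/N$ per coordinate. Absorbing this correction is exactly what the hypothesis $\eps > 2k^2/2^N$ permits, but one must be careful that the slack between $(1-2\delta)^k$ and $\eps^2/4$ in the calibration of $\delta$ is large enough to swallow it, and that the constants in Cauchy--Schwarz still deliver the clean bound $4/\eps^2$ rather than a slightly worse factor.
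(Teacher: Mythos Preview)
The paper does not actually prove this lemma: it is simply quoted as Lemma~42 of~\cite{IJK06}, together with the remark that the original is stated for ordered $k$-tuples with the sharper bound $t\le 1/\eps^2$, and that the version for $k$-subsets follows with a constant-factor loss because most $k$-tuples are sets under the hypothesis on $\eps$. So there is no ``paper's own proof'' to compare against.

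That said, your sketch is the standard argument underlying the cited result and is essentially correct. The three ingredients---greedy extraction of a maximal $\delta$-separated list from the set of good messages, the XOR distance-amplification estimate $(1-2\gamma)^k\le e^{-2k\gamma}$ calibrated so that $\delta=(1/k)\ln(2/\eps)$ forces pairwise inner products below $\eps^2/4$, and the Plotkin/Johnson Cauchy--Schwarz bound---combine exactly as you describe, and your inequality $4\eps^2 < 1/t+\eps^2/4$ even gives $t<4/(15\eps^2)$, comfortably inside the stated $4/\eps^2$. The only place to be careful, as you note, is the passage from $N^k$ to $\binom{[N]}{k}$; this is where the lower bound on $\eps$ is used, and the paper handles it not inside the amplification step but by the blanket observation that the tuple and subset versions agree up to a small additive error. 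Your identification of this as the main technical obstacle is on the mark.
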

Note that in~\cite{IJK06} the lemma is proved for tuples instead of sets, and has a $t\leq 1/\eps^2$. However, since most tuples are sets, it is straightforward to get the above version for sets. Plugging the list-decoding parameters from this lemma in the bound of Proposition~\ref{prop:listqfac} immediately gives the result.
\end{proof}

\section{Overview of the construction}\label{sec:main}

Our construction follows the general paradigm introduced by Trevisan~\cite{T99} and its subsequent adaptation against quantum storage by Ta-shma~\cite{TS09}. However, our proof technique differs from that of~\cite{TS09} in that it avoids constructing random access codes by copying the adversary's storage many times. Rather, we use the much stronger bounds on QFACs proved in Section~\ref{sec:qfac}. This is crucial in allowing us to prove an additive, rather than multiplicative, dependence of the output on the adversary's storage.

We first describe a few standard tools that are used in the construction, before giving it in detail. Its correctness will be proved in Section~\ref{sec:mainsecurity}.

\subsection{Preliminaries}

 \begin{definition}
A collection of subsets $S_1,\ldots,S_m \subset [t]$ is called a $(t,n,m,\rho)$ weak design if for all $i$, $|S_i|=n$ and for all $j$, $\sum_{i<j} 2^{|S_i \cap S_j|} \leq \rho(m-1)$.
\end{definition}

The following theorem is due to Raz, Reingold and Vadhan \cite{RRV99}.

\begin{theorem}\label{thm:1}
For every $m,t \in \mathbb{N}$ and $\rho\geq 2$, there is a $(t,n,m,\rho)$ design which is computable in time $(mt)^{O(1)}$ with $t =O(\frac{n^2}{\log \rho})$.
\end{theorem}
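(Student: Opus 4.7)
I would construct $S_1, \ldots, S_m$ greedily, using a probabilistic existence argument at each step and then derandomizing via the method of conditional expectations.

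The key estimate is a simple moment bound. Fix $S_1, \ldots, S_{j-1} \subseteq [t]$, each of size $n$, and let $S_j$ be drawn uniformly from the size-$n$ subsets of $[t]$. For any fixed $S_i$ of size $n$,
\[
\mathbb{E}_{S_j}\bigl[2^{|S_i \cap S_j|}\bigr] \;\leq\; (1+n/t)^{n} \;\leq\; \exp(n^{2}/t),
\]
since the hypergeometric random variable $|S_i \cap S_j|$ is dominated (in moment-generating-function sense) by $\mathrm{Bin}(n, n/t)$. Choosing $t = c\,n^{2}/\log \rho$ for a suitable absolute constant $c$ makes the right-hand side at most $\rho$, and summing over $i < j$ by linearity gives
\[
\mathbb{E}_{S_j}\sum_{i<j} 2^{|S_i \cap S_j|} \;\leq\; \rho(j-1) \;\leq\; \rho(m-1).
\]
In particular some size-$n$ subset $S_j$ achieves this bound, which is exactly the weak-design requirement at stage $j$.

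To make the construction run in time $(mt)^{O(1)}$, I would derandomize by scanning the elements of $[t]$ one by one and deciding inclusion of each in $S_j$ so as not to increase the conditional expectation of $\sum_{i<j} 2^{|S_i \cap S_j|}$ under the remaining random choices. For any partial assignment, each term $\mathbb{E}[2^{|S_i \cap S_j|} \mid \mathrm{partial}]$ depends only on the number of elements of $S_i$ already committed in or out and on the number of unused slots, so it admits a closed-form hypergeometric expression computable in $\poly(m,t)$ time. Iterating this greedy step for $j = 1,\ldots,m$ then yields the full design.

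The only mildly delicate point is keeping the exact-size-$n$ constraint tracked cleanly in the recurrence: because the natural product factorization of $\mathbb{E}[2^{|S_i \cap S_j|}]$ is cleanest under independent Bernoulli sampling with probability $n/t$, one has to condition jointly on the remaining slot count and on the number of uncommitted elements of each $S_i$. This is straightforward and does not alter the asymptotic parameters; alternatively one may first sample $S_j$ by independent inclusions and then trim to size exactly $n$, absorbing the trimming loss into the constant $c$ in $t = O(n^{2}/\log \rho)$.
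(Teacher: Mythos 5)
Your proposal is correct and follows the same strategy as the proof of Raz, Reingold and Vadhan~\cite{RRV99}, which the paper cites but does not reproduce: build $S_1,\dots,S_m$ greedily, show by a first-moment bound that some admissible $S_j$ exists at each stage, and derandomize by the method of conditional expectations. The one genuine difference is the random model used for $S_j$. RRV99 partition $[t]$ into $n$ blocks of size $t/n$ and choose one element uniformly from each block, so that $|S_i\cap S_j|$ is exactly $\mathrm{Bin}(n,n/t)$, the bound $\mathbb{E}\bigl[2^{|S_i\cap S_j|}\bigr]=(1+n/t)^n$ is immediate, the constraint $|S_j|=n$ is automatic, and the derandomization is a clean one-element-per-block sweep with a product-form conditional expectation. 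You instead take $S_j$ uniformly over $n$-subsets, which makes $|S_i\cap S_j|$ hypergeometric, forces you to invoke Hoeffding's without-replacement domination to reduce to the binomial moment, and means the exact-size constraint has to be carried through the conditional-expectation recursion --- all doable, as you observe, but extra bookkeeping for no gain in parameters. One small caution about your fallback: sampling by independent $\mathrm{Bernoulli}(n/t)$ inclusions and then ``trimming'' only handles $|S'|>n$; if $|S'|<n$ you must pad, and padding can \emph{increase} intersections, so you would need either to oversample slightly or to condition on $|S'|\ge n$. The block-partition model sidesteps all of this.
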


Note that the value of $t$ blows up when $\rho$ approaches $1$. In order to keep $t$ bounded even as $\rho$ approaches $1$, we can use a construction given in \cite{RRV99}.  Even though the construction is  computable in polynomial time, it does not meet many finer notions of efficiency which are of interest to us. Hartman and Raz \cite{HR01} achieved similar parameters with a better efficiency:

\begin{theorem}\label{thm:12}
For every $m,t \in \mathbb{N}$ such that $m>n^{\log n}$ and $0<\gamma<\frac{1}{2}$, there is a $(t,n,m,1+\gamma)$ design such that $t =O(n^2 \log \frac{1}{\gamma})$. Further, each individual set in the design can be output in time polynomial in $t$ and $n$. 
\end{theorem}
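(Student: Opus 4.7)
Since Theorem~\ref{thm:12} is attributed to Hartman and Raz~\cite{HR01}, the plan is to sketch how I would reconstruct their argument, building on the polynomial-based design of Raz-Reingold-Vadhan~\cite{RRV99}. The base construction identifies $[t_0]$ with $\mathbb{F}_q \times \mathbb{F}_q$ for a prime power $q = \Theta(n)$, and for each polynomial $p \in \mathbb{F}_q[x]$ of degree less than $d$ sets $S_p = \{(x, p(x)) : x \in \mathbb{F}_q\}$. Then $|S_p| = n$, $t_0 = q^2 = O(n^2)$, and we get $m \leq q^d$ sets; since distinct degree-$(d-1)$ polynomials agree on at most $d-1$ points, the intersections are small. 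The core estimate $\sum_{i<j} 2^{|S_i \cap S_j|} \leq \rho(m-1)$ reduces, via the identity $2^k = |\{T : T \subseteq S_i \cap S_j,\ |T| = k\}|$ followed by double-counting over subsets $T$ of $S_j$, to counting polynomials through each $k$-subset; this count is $q^{d-k}$, giving $\sum_{k=0}^{d-1} \binom{q}{k} q^{d-k} \leq e \cdot q^d$, i.e.\ $\rho = O(1)$.

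To sharpen $\rho$ to $1+\gamma$, I would use a layered construction. Allocate $\ell = \Theta(\log(1/\gamma))$ disjoint blocks, each of size $O(n^2)$, and for each set $S_i$ distribute its $n$ elements across the blocks using an independent polynomial design at each layer, with the per-layer sizes chosen so that the ``excess mass'' $2^{|S_i^{(k)} \cap S_j^{(k)}|} - 1$ decays geometrically in $k$. Since the blocks are disjoint, intersections factor as $|S_i \cap S_j| = \sum_k |S_i^{(k)} \cap S_j^{(k)}|$, so $2^{|S_i \cap S_j|} = \prod_k 2^{|S_i^{(k)} \cap S_j^{(k)}|}$; summing over $i$ and telescoping across layers bounds the aggregate excess by a factor of order $2^{-\ell}$ relative to $m$. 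Choosing $\ell = \Theta(\log(1/\gamma))$ therefore brings this excess below $\gamma(m-1)$, yielding $\rho \leq 1+\gamma$ with total ground-set size $t = O(\ell \cdot n^2) = O(n^2 \log(1/\gamma))$. The hypothesis $m > n^{\log n}$ ensures that each layer admits polynomials of sufficient degree for the geometric decay to remain active at every level.

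The main obstacle, and the reason~\cite{HR01} improves on~\cite{RRV99}, is the per-set efficiency requirement: each $S_i$ must be output in time $\poly(t, n)$ from its index alone, without enumerating all $m$ sets. I would address this through an explicit product indexing: decompose $i \in [m]$ into $\ell$ sub-indices $(i_1, \ldots, i_\ell)$, where each $i_k$ selects a polynomial in the $k$-th layer, so that outputting $S_i$ reduces to evaluating $\ell$ polynomials over $\mathbb{F}_q$ --- clearly in time polynomial in $n$ and $\log m$. The delicate step is verifying that this deterministic indexing realizes the $(1+\gamma)(m-1)$ bound for \emph{every} $j$, not merely on average, which forces a careful coordination between the per-layer degrees, block sizes, and the geometric schedule of set contributions. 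I expect balancing these explicit parameter choices against the output-efficiency constraint to be the most technically involved part of the proof.
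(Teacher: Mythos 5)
The paper does not prove Theorem~\ref{thm:12}: it quotes the result from Hartman and Raz~\cite{HR01}, which is itself an efficiency refinement of a construction from~\cite{RRV99}. There is therefore no in-paper proof to compare your sketch against, so I will assess it on its own.

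Your description of the base Reed--Solomon design and the double-counting bound is correct (one small point: taking $S_p = \{(x,p(x)) : x \in H\}$ for an $n$-element $H \subset \F_q$, the natural ground set is $H \times \F_q$ of size $nq$ rather than $q^2$, so a single layer over a field of size $q = \Theta(n/\gamma)$ already gives $\rho \leq e^{n/q} \leq 1 + O(\gamma)$ with $t = O(n^2/\gamma)$). The real gap is in the layered step, and it is not merely a matter of ``delicate balancing'': a product indexing over independent per-layer polynomial designs provably cannot reach $t = O(n^2 \log(1/\gamma))$. Write $i=(i_1,\dots,i_\ell)$, $j=(j_1,\dots,j_\ell)$, and let $n_k,q_k,m_k$ be the per-layer set size, field size, and number of polynomials. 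Expanding $\prod_k(2^{n_k}+R_k)-2^n$ with $R_k=\sum_{i_k\neq j_k}2^{|S^{(k)}_{i_k}\cap S^{(k)}_{j_k}|}\leq m_k e^{n_k/q_k}$, the top term $T=[\ell]$ forces $\sum_k n_k/q_k = O(\gamma)$, while the cross terms --- where $i_k=j_k$ in some layers $k$, each contributing a factor $2^{n_k}$ --- force $m_k$ to be much larger than $2^{n_k}$. With $t=\sum_k n_k q_k$ and the constraints $\sum_k n_k = n$ and $\sum_k n_k/q_k = O(\gamma)$, Cauchy--Schwarz gives $n^2 = (\sum_k n_k)^2 \leq (\sum_k n_k q_k)(\sum_k n_k/q_k) = O(\gamma\, t)$, hence $t = \Omega(n^2/\gamma)$ --- no better than a single layer, and far from $O(n^2\log(1/\gamma))$ when $\gamma$ is small. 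The constructions in~\cite{RRV99,HR01} are not products of independent per-layer designs: the indexing must be coordinated across layers (roughly, via an encoding that guarantees any two indices disagree in most layers, so that the $2^{n_k}$ collision factors almost never occur), and this is the idea, missing from your sketch, that makes $O(n^2\log(1/\gamma))$ achievable.
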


For the purposes of this paper, let $t(n,\rho)$ denote the smallest value of $t$  for which Theorems~\ref{thm:1} or~\ref{thm:12} guarantee the existence of a weak $(t,n,m,\rho)$ design. Whether we use Theorem~\ref{thm:1} or \ref{thm:12} depends on how small we want $\rho$ to be. 

Our last tool is the Nisan-Wigderson generator with respect to a function $f:\{0,1\}^n \rightarrow \{0,1\}$. 

\begin{definition}
Let $S_1,\ldots,S_m$ be a $(t,n,m,\rho)$ weak-design. Let $x\in\{0,1\}^{2^n}$. Then $NW^x:\{0,1\}^t \rightarrow \{0,1\}^m$ is defined as
\[ NW^x(y) = x_{y_{S_1}}, \ldots, x_{y_{S_m}} \]
Here $x_{S_j}$ denotes the restriction of $x$ to the indices in $S_j$.
\end{definition}

\subsection{Description of the construction}\label{sec:description}

Let $C:\{0,1\}^N\to\{0,1\}^{\overline N}$ be a code with good (possibly approximate) list-decoding capabilities, and $(S_1,\ldots,S_m)$ be a $(t,\log \overline N,m,\rho)$ design as discussed above. Then the extractor is obtained by combining these two constructs in the following way: 
\begin{align*}
Ext_C:\,&\{0,1\}^{N} \times \{0,1\}^t \rightarrow \{0,1\}^m\\
& (x,y) \mapsto  NW^{C(x)}(y)
\end{align*}


\subsection{Main theorem}

Our main result is the following:

\begin{theorem}\label{thm:XORextmain}
Let $\delta,\eps>0$. Let $C:\{0,1\}^N\to\{0,1\}^{\overline N}$ be a $(\eps/m,\delta,L)$ approximately list-decodable code, and $t = t(\log \overline{N} ,\rho)$ such that there exists a $(t,\log\overline N,m,\rho)$ design for all large enough $m$.  Then for any $K,b\in\N$ the function $Ext_C:\{0,1\}^N \times \{0,1\}^t \rightarrow \{0,1\}^m$ is a $(K, 2\eps)$ extractor secure against $b$ qubits of quantum storage, where 
$$ m=\frac{K-b-t-H(\delta)N -\log L-\Omega(\log(1/\eps) + \log N)}{1+\rho}$$ 
\end{theorem}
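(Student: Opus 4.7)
I would prove this by contradiction, following the classical reconstruction paradigm of Trevisan but replacing the random-access-code step by the QFAC bound of Proposition~\ref{prop:listqfac}. Suppose there exist a source $X$ with $H_\infty(X) \geq K$, an encoding $\Psi:\{0,1\}^N \to \mathcal{D}_b$, and measurements $\{M_{u,y}^0, M_{u,y}^1\}$ that distinguish $Ext_C(X, U_t) \circ \Psi(X) \circ U_t$ from $U_m \circ \Psi(X) \circ U_t$ with advantage strictly greater than $2\eps$. A standard Markov argument then yields a ``bad'' set $A \subseteq \{0,1\}^N$ with $\Pr_{X}[X \in A] \geq \eps$, hence $|A| \geq \eps \cdot 2^K$, such that for every $x \in A$ the same measurements distinguish $(NW^{C(x)}(U_t), \Psi(x))$ from $(U_m, \Psi(x))$ with advantage at least $\eps$.

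Fix any $x \in A$. A quantum hybrid argument on the $m$ output bits yields an index $i=i(x) \in [m]$ together with a single two-outcome measurement on $\Psi(x)$ that predicts the $i$-th output bit $NW^{C(x)}(y)_i = C(x)_{y_{S_i}}$ with advantage at least $\eps/m$, given $y$ and the first $i-1$ output bits. To turn this into a QFAC prediction for the codeword-position class $\mathcal{C} = \{f_j : x \mapsto C(x)_j\}_{j \in [\overline N]}$, I would make the predictor non-uniform in $x$ in the standard way: include $i$ in the classical advice ($\log m$ bits), fix the seed bits $y|_{\bar S_i}$ to a value that preserves the $\eps/m$ advantage on average over $y_{S_i}$ (a further $t - \log \overline N$ bits), and, using the weak-design property, hardwire each earlier output bit $C(x)_{y_{S_\ell}}$ for $\ell < i$ as a function of the overlap coordinates $y_{S_\ell \cap S_i}$ alone, at total cost $\sum_{\ell < i} 2^{|S_\ell \cap S_i|} \leq \rho(m-1)$ bits. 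The query $j = y_{S_i} \in [\overline N]$ then fully determines the codeword bit to predict.

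Counting classical advice as additional qubits of the QFAC, we obtain a QFAC on average for $(A, \mathcal{C})$ of total length $b + (1+\rho)m + t + O(\log m)$ and success probability $1/2 + \eps/m$. Since $C$ is $(\eps/m, \delta, L)$ approximately list-decodable, applying Proposition~\ref{prop:listqfac} with $\eps' = \eps/m$ gives
\[ \log |A| < H(\delta)N + b + (1+\rho)m + t + \log L + O(\log(m/\eps)). \]
Combined with the Markov bound $\log |A| \geq K - \log(1/\eps)$, and using $\log m \leq \log N$, this rearranges to $(1+\rho)m > K - b - t - H(\delta) N - \log L - \Omega(\log(1/\eps) + \log N)$, contradicting the choice of $m$ in the theorem statement and thereby proving security.

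The main obstacle is the quantum hybrid argument itself, since the state $\Psi(x)$ cannot be cloned or reused across the $m+1$ hybrids. The way around this is the familiar observation that a total distinguishing advantage of $\eps$ between the first and last hybrid forces some adjacent pair to be distinguished by a \emph{single} two-outcome measurement on $\Psi(x)$ with advantage $\eps/m$, so no duplication of storage is ever required and one pays only the standard factor-$m$ loss. Crucially, by passing directly from this $1/2+\eps/m$ predictor to the QFAC bound, rather than first amplifying the predictor to near-perfect success on a large fraction of positions as in~\cite{TS09}, we avoid the $(m/\eps)^{O(1)}$ multiplicative cost in $b$ that was the main limitation of the previous analysis; all further steps are careful bookkeeping of the advice length.
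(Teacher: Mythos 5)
Your proposal takes essentially the same route as the paper's proof: a Markov argument isolates a bad set $A$ of size at least $\eps\, 2^K$, a hybrid argument plus weak-design-based non-uniformity converts the distinguisher into a one-shot predictor of codeword positions, this predictor together with $\Psi$ is a QFAC on average for $(A,\mathcal C)$, and Proposition~\ref{prop:listqfac} then caps $\log|A|$, yielding the contradiction; the only cosmetic difference is that you fold the classical advice into the QFAC's qubit count, whereas the paper's Claim~\ref{obs2} partitions the bad set by advice value and bounds each of the $2^c$ pieces separately, which is mathematically equivalent. One small bookkeeping slip: your explicit advice tally is $\log m + (t - \log\overline N) + \rho(m-1)$, roughly $m$ bits short of the $(1+\rho)m + t + O(\log m)$ you then assert — the missing $\approx m$ bits (appearing in the paper's Lemma~\ref{use1} as the $m$ in $m + \log m + 3$) are needed to deterministically fix the uniform padding of the last $m-i$ output bits in Yao's distinguisher-to-predictor step, so you should account for those explicitly (or else argue the padding can remain random inside the measurement, in which case you actually get a slightly better constant than the theorem states).
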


We give two instantiations of this result. The first one uses the codes from Fact~\ref{fact:lcode}, and lets us achieve optimal seed length. We obtain it by setting $\rho=K^{\gamma/2}$, for any $\gamma>0$, and using the combinatorial designs guaranteed by Theorem~\ref{thm:1}:

\begin{corollary}\label{cor:lcode} Let $\gamma,c,c'>0$ be any constants. Let $C_R$ be the code obtained from Fact~\ref{fact:lcode} by setting $\eps=N^{-c}$. Then the function $Ext_{C_R}:\{0,1\}^{N} \times \{0,1\}^t \rightarrow \{0,1\}^m$, where $t = O(\log N)$ and $m=\Omega\left(\frac{K-b}{K^\gamma}\right)$, is a $(K, 2\eps)$ extractor against $b$ qubits of quantum storage for any $K\geq N^{c'}$. 
\end{corollary}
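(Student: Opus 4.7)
The plan is simply to instantiate Theorem~\ref{thm:XORextmain} with $C = C_R$ from Fact~\ref{fact:lcode} together with the weak-design construction of Theorem~\ref{thm:1}, and to check that each of the ``lower-order'' error terms in the output-length formula collapses to $O(\log N)$, which is then absorbed by the assumption $K \geq N^{c'}$.

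First I would fix the code parameters. Since Theorem~\ref{thm:XORextmain} requires a $(\eps/m, \delta, L)$ approximately list-decodable code with $\eps = N^{-c}$, I invoke Fact~\ref{fact:lcode} with error parameter $\eps' = N^{-c}/m$. This yields a polynomial-time computable code with blocklength $\overline N = O\bigl(N (m N^c)^4\bigr) = \poly(N)$ and list size $L = O\bigl((m N^c)^2\bigr) = \poly(N)$, so both $\log \overline N$ and $\log L$ are $O(\log N)$. Moreover the code is exactly list-decodable, so $\delta = 0$ and the $H(\delta) N$ term in the formula vanishes.

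Next I would choose the combinatorial design. Setting $\rho = K^\gamma$, the assumption $K \geq N^{c'}$ ensures $\rho \geq 2$ for all sufficiently large $N$, so Theorem~\ref{thm:1} produces a polynomial-time computable $(t, \log \overline N, m, \rho)$ weak design with $t = O\bigl((\log \overline N)^2 / \log \rho\bigr) = O\bigl(\log^2 N / (\gamma \log K)\bigr)$; using $\log K \geq c' \log N$ gives $t = O(\log N)$ as required by the corollary.

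Substituting into the output-length formula from Theorem~\ref{thm:XORextmain} then gives
\[
m \;=\; \frac{K - b - O(\log N)}{1 + K^\gamma} \;=\; \Omega\!\left(\frac{K-b}{K^\gamma}\right),
\]
since $1 + K^\gamma = \Theta(K^\gamma)$ and the numerator is $\Omega(K - b)$ in the non-trivial regime $K - b = \omega(\log N)$. There is no real obstacle here: once Theorem~\ref{thm:XORextmain} is in hand the corollary reduces to bookkeeping --- tracking $\log \overline N$, $\log L$, $\log(1/\eps)$, $t$, and the choice of $\rho$ so that every additive error term is $O(\log N) \ll N^{c'} \leq K$, while the multiplicative factor $1/(1+\rho)$ contributes exactly the stated $K^{-\gamma}$ loss.
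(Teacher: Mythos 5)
Your proposal is correct and takes essentially the same route as the paper: instantiate Theorem~\ref{thm:XORextmain} with the code from Fact~\ref{fact:lcode} and the weak designs of Theorem~\ref{thm:1}, observe that $\delta = 0$, $\log \overline N$, $\log L$, $\log(1/\eps)$, and $t$ are all $O(\log N)$, and let $\rho = \poly(K)$ supply the $K^{-\gamma}$ factor. The only cosmetic difference is your choice $\rho = K^\gamma$ where the paper takes $\rho = K^{\gamma/2}$; both yield $t = O(\log N)$ (using $\log K \geq c'\log N$) and both give $m = \Omega\bigl((K-b)/K^\gamma\bigr)$ in the non-vacuous regime $K - b = \omega(\log N)$, which you correctly flag.
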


An inconvenient aspect of this construction, particularly relevant to cryptography, is that, even though the extractor is polynomial-time computable, it is not \emph{locally computable}. Indeed, any bit of the output may require polynomial time to be computed, whereas one might wish for it to be computable in polylogarithmic time. We achieve such an extractor by taking $C = C_k:\{0,1\}^N \to \{0,1\}^{\binom{N}{k}}$ the XOR code $C_k(x)_{y_1,\ldots,y_k} = x_{y_1}\oplus\ldots \oplus x_{y_k}$. By using these codes together with the designs from Theorem~\ref{thm:12}, the bound from Corollary~\ref{cor:xorqfac} gives the following:

\begin{corollary}\label{thm:corr1}
Let $\alpha,\delta,c>0$  be any constants. Then there is a $k = O(\log(m/\eps)/\delta^2)$ such that the function $Ext_{C_k}:\{0,1\}^{N} \times \{0,1\}^t \rightarrow \{0,1\}^m$,  where $t = O(\log^4 N)$ and $m=\frac{1}{2}((\alpha -2 \delta)N - b )$, is a $(\alpha N, N^{-c})$ extractor against $b$ qubits of quantum storage.
\end{corollary}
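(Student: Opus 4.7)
The plan is to apply Theorem~\ref{thm:XORextmain} with $C$ equal to the $k$-XOR code $C_k$, combined with the approximate list-decoding parameters derived inside the proof of Corollary~\ref{cor:xorqfac} and the efficient designs of Theorem~\ref{thm:12}. Setting $\eps = N^{-c}/2$ so that the $2\eps$ error of the theorem matches the target $N^{-c}$, the lemma quoted in the proof of Corollary~\ref{cor:xorqfac} tells us that $C_k$ is $(\eps/m, \delta', L)$ approximately list-decodable with $\delta' = (1/k)\ln(2m/\eps)$ and $L = 4m^2/\eps^2 = \poly(N)$ (the hypothesis $\eps/m > 2k^2/2^N$ holds comfortably for large $N$). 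To keep the $H(\delta')N$ loss in the theorem below $\delta N$, I pick $k$ so that $\delta' \leq \delta^2$; then $H(\delta') = O(\delta^2 \log(1/\delta)) \leq \delta$ for small constant $\delta$, and this forces exactly $k = O(\log(m/\eps)/\delta^2)$, as stated. Since $k$ depends on $m$ only logarithmically (and $m \leq N$, $\eps \geq N^{-c}$ give $k = O(\log N / \delta^2)$), the dependency is not circular.

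Next I invoke Theorem~\ref{thm:12} with design parameter $n = \log\overline N = \log\binom{N}{k} = O(k\log N) = O(\log^2 N / \delta^2)$ and $\rho = 1 + \gamma$ for a small constant $\gamma = \gamma(\alpha,\delta) > 0$ chosen below. The hypothesis $m > n^{\log n} = 2^{O((\log\log N)^2)}$ is easily met for any $m = \Omega(N)$, and the construction produces a design computable set-by-set in time $(\log N)^{O(1)}$, with total seed length $t = O(n^2 \log(1/\gamma)) = O(\log^4 N)$. Plugging into Theorem~\ref{thm:XORextmain} with $K = \alpha N$ gives
\[
m \;\geq\; \frac{\alpha N - b - t - H(\delta')N - \log L - O(\log N)}{1+\rho} \;\geq\; \frac{(\alpha - \delta)N - b - O(\log^4 N)}{2 + \gamma}.
\]
An elementary calculation shows that any $\gamma \leq 2\delta/(\alpha - 2\delta)$ (or any positive $\gamma$ when $\alpha \leq 2\delta$) forces the right-hand side to exceed $\frac{1}{2}((\alpha - 2\delta)N - b)$ for all sufficiently large $N$. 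Local computability comes along for free: each output bit equals $C_k(x)_{y_{S_i}}$, which is the XOR of the $k = O(\log N)$ source bits indexed by $y_{S_i}$, and the set $S_i$ itself is producible in $(\log N)^{O(1)}$ time by Theorem~\ref{thm:12}.

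The only genuine obstacle is bookkeeping: making sure the three low-order terms $t$, $\log L$, and $H(\delta')N$ all sit inside the allowed $\delta N$ slack, while the denominator $1+\rho$ remains close to $2$. The efficiency of the Hartman--Raz design is essential here, since the weaker Theorem~\ref{thm:1} gives $t = \Omega(n^2/\log \rho)$, which diverges as $\rho \to 1$ and would destroy the $\log^4 N$ seed length targeted by the statement.
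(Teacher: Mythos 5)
Your proof is correct and follows exactly the route the paper implies (it gives no separate proof of this corollary, only the pointer to Theorem~\ref{thm:XORextmain}, the XOR code's approximate list-decoding parameters from the lemma inside Corollary~\ref{cor:xorqfac}, and the Hartman--Raz designs of Theorem~\ref{thm:12}). The only loose spot is the claim that $\delta' \leq \delta^2$ forces $H(\delta') \leq \delta$ ``for small constant $\delta$'' --- for $\delta$ close to $1/2$ one needs an extra constant factor in $k$, but since the statement allows $k = O(\log(m/\eps)/\delta^2)$ this is absorbed harmlessly.
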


Note that this extractor is locally computable, and every individual bit of the output can be computed in polylogarithmic time, as the designs in Theorem~\ref{thm:12} are locally computable. Note also that the extractor only works for linear entropy rates: as mentioned earlier, this is tight due to lower bounds by Viola~\cite{V03} on the seed length required to extract from sources with polynomially small min-entropy using low complexity circuits.

\section{Proof of security}\label{sec:mainsecurity}

We give the proof of security of our construction. The first steps of the proof follow the general reconstruction paradigm from~\cite{T99}, and we give them first. 

\subsection{Proofs in the reconstruction paradigm}\label{sec:reconstruction}

We start with the following standard observation.

\begin{observation}\label{use2}
In order to prove that $Ext:\{0,1\}^N\times\{0,1\}^t\rightarrow\{0,1\}^m$ is a $(K,2\eps)$ strong extractor against $b$ qubits of storage, it suffices to prove that for any collection of measurements $\{M_{u,y}^1,M_{u,y}^0\}_{(u,y)\in\{0,1\}^{m+t}}$ on $\mathcal{D}_b$, and $\Psi:\{0,1\}^N \rightarrow \mcal{D}_b$, there are at most $\eps 2^{K}$ strings $x\in\{0,1\}^N$  such that 
\begin{align}\label{an2}
\abs{\Exs{y\in\{0,1\}^t}{\Tr\Big(\Exs{u\in\{0,1\}^m}{ M_{u,y}^1 \Psi(x)(M_{u,y}^1)^\dagger}\Big)  - \Tr\Big(M_{Ext(x,y),y}^1\Psi(x)(M_{Ext(x,y),y}^1)^\dagger\Big)}}  > \eps 
\end{align}
\end{observation}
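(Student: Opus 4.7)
The plan is a standard Markov-style averaging argument over the source. Fix any distribution $X$ with $H_\infty(X)\geq K$ and any encoding $\Psi:\{0,1\}^N\to\mcal{D}_b$, and aim to bound the trace distance appearing in Definition~\ref{def:strongquantum} by $2\eps$. By the equivalent measurement-based formulation of trace distance between classical-quantum states recalled right after that definition, it suffices to show that for every collection of measurements $\{M_{u,y}^0,M_{u,y}^1\}_{(u,y)\in\{0,1\}^{m+t}}$, the quantity $|\E_{x\sim X}[f(x)]|$ is at most $2\eps$, where
$$f(x)\;:=\;\Exs{y\in\{0,1\}^t}{\Tr\Big(\Exs{u\in\{0,1\}^m}{M_{u,y}^1\Psi(x)(M_{u,y}^1)^\dagger}\Big)-\Tr\Big(M_{Ext(x,y),y}^1\Psi(x)(M_{Ext(x,y),y}^1)^\dagger\Big)}$$
is the $y$-averaged distinguishing bias at source value $x$; note that condition (\ref{an2}) is precisely the statement $|f(x)|>\eps$.

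Next I would fix an arbitrary measurement collection and set $\mathrm{BAD}:=\{x\in\{0,1\}^N:|f(x)|>\eps\}$. The hypothesis of the observation, applied to this same $\Psi$ and measurement collection, gives $|\mathrm{BAD}|\leq \eps\,2^K$. Combined with the min-entropy bound $\Pr[X{=}x]\leq 2^{-K}$ holding uniformly in $x$, this immediately yields $\Pr_{x\sim X}[x\in \mathrm{BAD}]\leq \eps$. Splitting the expectation according to membership in $\mathrm{BAD}$ and using the trivial bound $|f(x)|\leq 1$ (each trace is a probability in $[0,1]$, so the difference lies in $[-1,1]$), the triangle inequality then delivers
$$|\E_{x\sim X}[f(x)]|\;\leq\;\E_{x\sim X}[|f(x)|]\;\leq\;\Pr[x\in\mathrm{BAD}]\cdot 1 + \eps \;\leq\; 2\eps,$$
which is the desired bound.

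There is no substantive obstacle. The only point worth tracking is that while the optimal distinguishing measurement in the trace-distance formula is chosen after $X$ and $\Psi$ are fixed, the hypothesis of the observation is stated uniformly over all measurement collections and makes no reference to $X$; therefore it legitimately applies to the worst-case measurement, and the resulting bad set $\mathrm{BAD}$ (which does depend on the chosen measurement) still has the claimed cardinality bound.
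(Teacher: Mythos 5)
Your argument is correct and is essentially the same Markov/averaging argument the paper uses, just written in the direct direction (bounding $\E_{x\sim X}[|f(x)|]$ by splitting on the bad set) rather than by contraposition (assuming the extractor fails and deducing that more than $\eps 2^K$ strings must be bad). The observation at the end about the quantifier order — the hypothesis is universal over measurement collections and hence applies to whichever one realizes the trace distance — is exactly the right point to flag and is handled correctly.
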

 
\begin{proof}
Assume for contradiction that $Ext:\{0,1\}^N \times \{0,1\}^t \rightarrow \{0,1\}^m$ is not a $(K,2\eps)$ strong extractor against $b$ qubits of quantum storage. By definition, there exist measurements\\ $\{M_{u,y}^1,M_{u,y}^0\}_{(u,y)\in\{0,1\}^{m+t}}$ on $b$ qubits such that    
$$\abs{\Exs{x\sim X, y\sim U_t}{\Tr\Big(\Exs{u\in\{0,1\}^m}{ M_{u,y}^1 \Psi(x)(M_{u,y}^1)^\dagger}\Big)  - \Tr\Big(M_{Ext(x,y),y}^1\Psi(x)(M_{Ext(x,y),y}^1)^\dagger\Big)}} > 2\eps $$
where $X$ is the source's distribution. Since it has min-entropy at least $K$, it must be true that for at least $\eps 2^{K}$ inputs $x$,
$$\abs{\Exs{y\sim U_t}{\Tr\Big(\Exs{u\in\{0,1\}^m}{ M_{u,y}^1 \Psi(x)(M_{u,y}^1)^\dagger}\Big)  - \Tr\Big(M_{Ext(x,y),y}^1\Psi(x)(M_{Ext(x,y),y}^1)^\dagger\Big)}}  > \eps $$
\end{proof}

Fix a collection of measurements $\{M_{u,y}^1,M_{u,y}^0\}_{u,y\in\{0,1\}^{m+t}}$ on $\mathcal{D}_b$. The previous observation shows that, in order to show that $Ext$ is a strong extractor, it suffices to bound the number of strings $x$ such that~(\ref{an2}) holds. For this, we use the reconstruction approach  in \cite{T99}. For a fixed $x$, define $M_x:\{0,1\}^{m+t}\rightarrow\{0,1\}$ as the probabilistic procedure which, on input $(u,y)\in \{0,1\}^{m+t}$, outputs $1$ with probability $\Tr(M^1_{u,y} \Psi(x)(M^1_{u,y})^\dagger)$, where $\Psi(x)$ is the state of the adversary's storage on $x$.
For the most part our proofs will simply treat $M_x$ as a probabilistic oracle. Moreover, all probabilities that we write involving $M_x$, or other oracle circuits making calls to $M_x$, will implicitly be taken over $M_x$'s internal randomness. 

The first step is to use the  standard hybrid argument followed by Yao's distinguisher versus predictor lemma to get an oracle circuit $T$ which queries $M_x$ exactly once, and is such that $T$ predicts $Ext(x,y)_i$ with some advantage over a random guess when $y$ as well as the value of $x$ on some related points are given as input. We skip the (by now, standard) argument and state the final result (see \cite{T99} for details).

\begin{lemma}\label{use1}
Let $x,\eps$ be such that (\ref{an2}) is satisfied, and $Ext(x,y)_i$ be the $i^{th}$ bit of the extractor's output on $(x,y)$. Then  using $ m+\log m+3$ bits of classical advice, we can construct an oracle circuit $T$ which makes one query to $M_x$ and is such that for some $1 \leq i \leq m$,  $T$ satisfies:
\begin{equation}\label{an4}
Pr_{y \in U_t}[T^{M_x}(y,Ext(x,y)_1,\ldots,Ext(x,y)_{i-1})=Ext(x,y)_{i}] \geq \frac{1}{2}+\frac{\eps}{m}
\end{equation}
\end{lemma}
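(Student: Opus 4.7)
The plan is to apply the standard Yao hybrid-and-predictor argument to the probabilistic oracle $M_x$, treated as a black box that on input $(u,y)$ returns $1$ with probability $\Tr(M^1_{u,y}\Psi(x)(M^1_{u,y})^\dagger)$. With this viewpoint the whole argument becomes purely classical, and the quantum structure of the adversary reappears only through the single query that the final circuit $T$ will make. First, for $0 \leq j \leq m$ I would define the hybrid $H_j$ by sampling $y \sim U_t$ and uniform bits $u_{j+1},\ldots,u_m$ and outputting $(y,Ext(x,y)_1,\ldots,Ext(x,y)_j,u_{j+1},\ldots,u_m)$; then $H_0$ is the ``uniform'' side of (\ref{an2}) and $H_m$ is the ``$Ext(x,y)$'' side. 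Writing $p_j(y):=\E_{u_{j+1},\ldots,u_m}[M_x(Ext(x,y)_1,\ldots,Ext(x,y)_j,u_{j+1},\ldots,u_m,y)]$, the hypothesis (\ref{an2}) becomes $|\E_y[p_0(y)-p_m(y)]|>\eps$, and a telescoping triangle inequality yields some $i\in[m]$ with $|\E_y[p_{i-1}(y)-p_i(y)]|\geq \eps/m$; recording the sign in a single flip bit, we may assume $\E_y[p_i(y)-p_{i-1}(y)]\geq \eps/m$.

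Next, construct $T$ as follows: on input $(y,z_1,\ldots,z_{i-1})$ (with the $z_k$ intended to be $Ext(x,y)_k$), sample fresh random bits $c,u_{i+1},\ldots,u_m$, query $M_x$ a single time at $(z_1,\ldots,z_{i-1},c,u_{i+1},\ldots,u_m,y)$, and output $c$ if the answer is $1$ and $\bar c$ otherwise. Using $p_{i-1}(y)=\tfrac12 p_i(y)+\tfrac12\E_{u_{i+1},\ldots,u_m}[M_x(Ext(x,y)_1,\ldots,Ext(x,y)_{i-1},\overline{Ext(x,y)_i},u_{i+1},\ldots,u_m,y)]$, a direct expansion of the success probability gives
\[
\Pr\big[T^{M_x}(y,Ext(x,y)_1,\ldots,Ext(x,y)_{i-1})=Ext(x,y)_i\big] \;=\; \tfrac12+\E_y[p_i(y)-p_{i-1}(y)] \;\geq\; \tfrac12+\frac{\eps}{m},
\]
where the probability is over $y$, $c$, the $u_{i+1},\ldots,u_m$, and the internal coins of $M_x$.

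The advice consists of $\lceil\log m\rceil$ bits to specify $i$, one bit for the sign, and the $m-i\leq m$ random bits $u_{i+1},\ldots,u_m$ used in the query; together with a couple of bits of slack for rounding and Boolean flags this fits in $m+\log m+3$. There is no substantive obstacle in this step: it is the textbook Yao reduction and it goes through verbatim for probabilistic oracles, since we never need to re-measure or clone the quantum state $\Psi(x)$. The real leverage of this lemma --- and the point where our proof will diverge from~\cite{TS09} --- comes later, when the single-query property of $T$ is combined with the QFAC lower bounds from Section~\ref{sec:qfac} applied to the list-decodable code $C$ underlying $Ext_C$ in order to derive a contradiction.
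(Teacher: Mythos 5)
Your argument is correct and is exactly the standard Yao hybrid-and-predictor reduction that the paper itself only cites (``see \cite{T99} for details'') rather than spelling out, so there is no substantive discrepancy with the paper's own treatment. The hybrid definition, the telescoping bound $|\E_y[p_{i-1}-p_i]|\geq\eps/m$, and the algebra giving $\Pr[T=Ext(x,y)_i]=\tfrac12+\E_y[p_i-p_{i-1}]$ are all right.

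One small internal inconsistency is worth fixing: you first describe $T$ as sampling ``fresh random bits'' $c,u_{i+1},\ldots,u_m$ and state the success probability as an average over those bits, but the lemma's conclusion $\Pr_{y\in U_t}[\cdot]\geq 1/2+\eps/m$ is a statement with probability taken only over $y$ (and $M_x$'s internal coins), and your own advice count then lists $u_{i+1},\ldots,u_m$ as part of the advice --- i.e.\ as hardwired, not sampled. The missing sentence is the standard averaging step: since the advantage $\E_y\E_{c,u_{>i}}[\cdot]\geq 1/2+\eps/m$ holds on average over $(c,u_{>i})$, there is a fixed choice $(c^*,u^*_{>i})$ attaining at least this advantage over $y$ alone, and that fixed choice (at most $m-i+1$ bits) is what is placed in the advice string. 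With that fix, $T$ is deterministic modulo its single oracle call, the probability in \eqref{an4} is over $y$ and $M_x$'s coins only as intended, and the advice tally ($\lceil\log m\rceil$ for $i$, one flip bit, $m-i+1\leq m$ bits for $c^*,u^*_{>i}$, plus slack) fits within $m+\log m+3$.
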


Our next step is to construct a small circuit $R_x$ which predicts the value of $C(x)$ at any position $y$ with some non-negligible success probability, leading to the following technical lemma:

\begin{lemma}\label{lem:main1}
Let $x,\eps$ be such that (\ref{an2}) is satisfied. Then using $m(1+\rho) +\log m + t +O(1)$ bits of classical advice, we can construct an oracle circuit $R_x$ which makes one query to $M_x$ and predicts $C(x)_z$ with probability $1/2+\eps/m$, on average over the choice of $z\in\{0,1\}^{\overline{N}}$.
\end{lemma}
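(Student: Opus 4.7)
I plan to follow the standard Nisan--Wigderson reconstruction, starting from the predictor $T$ produced by Lemma~\ref{use1}. That lemma provides, for some index $i\in[m]$ encoded in $m+\log m + 3$ bits of advice, an oracle circuit $T$ making one query to $M_x$ and satisfying
$$\Pr_{y\in U_t}\bigl[T^{M_x}(y, C(x)_{y_{S_1}},\ldots,C(x)_{y_{S_{i-1}}}) = C(x)_{y_{S_i}}\bigr] \geq \tfrac12 + \tfrac{\eps}{m},$$
where I have used the fact that the $j$-th output coordinate of $Ext_C$ equals $C(x)_{y_{S_j}}$. So the task is to convert $T$, whose job is to predict the next bit of the NW generator, into a circuit that predicts a uniformly random position of the codeword $C(x)$.

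My first step will be to split $y$ as $(y_{S_i},y_{\overline{S_i}})$ so that $y_{S_i}\in\{0,1\}^{\log\overline N}$ plays the role of the codeword position $z$ that I want to predict. A standard averaging argument over $y_{\overline{S_i}}$ fixes a string $y_{\overline{S_i}}^\star$ --- which I add to the advice at a cost of at most $t$ bits --- for which the success probability of $T$ is still at least $1/2+\eps/m$ when $z = y_{S_i}$ is drawn uniformly.

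Next, with $y_{\overline{S_i}}^\star$ frozen, for each $j<i$ the index $y_{S_j}$ becomes a deterministic function of $z$ that depends only on the $|S_i\cap S_j|$ coordinates $z|_{S_i\cap S_j}$, and therefore so does the ``hybrid'' bit $C(x)_{y_{S_j}}$. I will encode each such hybrid bit by its full truth table $f_j:\{0,1\}^{|S_i\cap S_j|}\to\{0,1\}$ of size $2^{|S_i\cap S_j|}$ and include all the $f_j$'s in the advice. The key bookkeeping is that the weak-design property gives $\sum_{j<i}2^{|S_i\cap S_j|}\leq \rho(m-1)\leq \rho m$, so these truth tables contribute only $\rho m$ bits of advice.

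The circuit $R_x$ then operates as follows on input $z$: it assembles $y=(z,y_{\overline{S_i}}^\star)$, reads off $b_j = f_j(z|_{S_i\cap S_j}) = C(x)_{y_{S_j}}$ from the advice for each $j<i$, and outputs $T^{M_x}(y, b_1,\ldots,b_{i-1})$, inheriting the single query to $M_x$. Summing the contributions, the total advice is $(m+\log m + 3) + t + \rho m = m(1+\rho) + \log m + t + O(1)$, exactly as claimed, and the success probability over uniform $z$ is $\geq 1/2+\eps/m$ by construction. The argument has no real difficulty beyond careful bookkeeping; the only conceptually important ingredient is the weak-design bound, which is specifically tailored so that the hybrid advice remains linear in $m$ rather than exponential.
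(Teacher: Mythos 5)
Your proof is correct and follows essentially the same route as the paper: reduce to the predictor $T$ from Lemma~\ref{use1}, split the seed as $(y_{S_i}, y_{\overline{S_i}})$, average to fix $y_{\overline{S_i}}$, and hardwire the hybrid bits $C(x)_{y_{S_j}}$ for $j<i$, using the weak-design bound to cap the hardwired advice at $\rho(m-1)$ bits. You are in fact a bit more explicit than the paper where it matters --- you spell out that each hybrid bit depends only on $z|_{S_i\cap S_j}$ and hence can be stored as a truth table of size $2^{|S_i\cap S_j|}$, which is precisely what the paper's terse phrase ``by the definition of a weak design'' is invoking.
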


\begin{proof}
By Lemma~\ref{use1}, using $m+\log m + 3$ bits of advice, we can get an oracle circuit $T$ which makes exactly one query to $M_x$ and for some $1\leq i \leq m$ satisfies 
\[ \Pr_{y} [T^{M_x}(y,C(x)_{y_{S_1}},\ldots,C(x)_{y_{S_{i-1}}})=C(x)_{y_{S_i}}] \geq \frac{1}{2} + \frac{\eps}{m} \]
Let us split $y$ into two parts $z=y_{S_i}$ and $w=y_{[t]-S_i}$. Let $y_{S_j}$ be denoted by $h_j(z,w)$. The above probability can then be rewritten as 
\[ \Pr_{z, w} [T^{M_x}(z,w,C(x)_{h_1(z,w)},\ldots,C(x)_{h_{i-1}(z,w)})=C(x)_z] \geq \frac{1}{2} + \frac{\eps}{m} \]
By an averaging argument, we can fix a $w$ (using at most $t$ bits of advice) such that the above inequality holds with the probability taken over $z$. Let us hardwire all the possible values of $C(x)_{h_j(z,w)}$ (for the fixed value of $w$), as $z$ varies over $\{0,1\}^{\overline{N}}$ and $j$ varies between $1$ and $i-1$, into the circuit $T$. By the definition of a weak design, there are at most $(m-1)\rho $ bits that need to be hardwired. Let $R_x$ be the circuit with all the hardwired values. $R_x$ satisfies the following
\begin{equation}\label{eqn:xorlem}
\Pr_{z} [R_x^{M_x}(z)=C(x)_z] \geq \frac{1}{2} + \frac{\eps}{m} 
\end{equation}
The total classical advice taken so far is $m + \log m + t + m \rho + O(1)$. 
\end{proof}

\subsection{Security against quantum storage from lower bounds on QFACs}\label{sec:qfacsecurity}

Assume for contradiction that there is an adversary to $Ext$, which can distinguish its output from uniform given access to the seed $y$ and some partial quantum information $\Psi(x)\in\mathcal{D}_b$ about the source. Such an adversary can be described by the mapping $\Psi$, together with a collection of measurements $\{M_{u,y}^1,M_{u,y}^0\}_{u,y\in\{0,1\}^{m+t}}$ on $\mathcal{D}_b$ describing the adversary's measurement on his quantum information $\Psi(x)$, when provided with the seed and the extractor's output\footnote{This describes the most general situation, as we can always assume that any measurement made by the adversary is done at the end of his recovery procedure.}. 

For a fixed $x$, let $M_x:\{0,1\}^{m+t}\rightarrow\{0,1\}$ as in Section~\ref{sec:reconstruction}. By Observation~\ref{use2}, to prove that $Ext$ is a $(K,2\eps)$ strong extractor secure against $b$ qubits of quantum storage, it suffices to prove that there are at most $\eps2^{K}$ strings $x$ such that~\eqref{an2} holds.

The key conceptual step in our proof is to observe that from the circuit $R_x$ given by Lemma~\ref{lem:main1}, we  can construct a QFAC for the family $\mathcal{C} = \{f_i:x\mapsto C(x)_i,\, i\in [\overline{N}]\}$ of codeword positions, and the set $A$ of all $x$ satisfying~\eqref{an2}. The strong lower bounds we proved in Section~\ref{sec:qfac} then let us bound the size of the set $A$ as a function of the adversary's storage and the list-decoding properties of $C$. The following claim makes this connection formal.

\begin{claim}\label{obs2}
Let $\eta>0$ and $A\subseteq\{0,1\}^N$ be such that, for any $x\in A$, using only $c$ bits of classical advice, we can construct a circuit $R_x$ which has access to a $b$-qubit quantum state and is such that for a random $y$, it predicts $C(x)_y$ with probability $1/2+\eta$. Then the cardinality of $A$ is at most $s\cdot 2^c$, where $s$ is the maximum size of a set $B$ such that there exists a $(N,b,\eta)$ QFAC for $(B, \mathcal{C})$.
\end{claim}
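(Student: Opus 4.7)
The strategy is to partition $A$ according to the $c$-bit advice string used to construct $R_x$, and then to interpret each part of the partition as the domain of a valid QFAC, invoking the definition of $s$ directly.

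First, for each $a \in \{0,1\}^c$ let $A_a \subseteq A$ be the subset of those $x\in A$ whose associated advice is $a$. Then $A = \bigcup_{a\in\{0,1\}^c} A_a$, so by the union bound it suffices to show that $|A_a| \leq s$ for every fixed $a$, after which the claim $|A|\leq 2^c\cdot s$ follows immediately.

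Second, fix $a \in \{0,1\}^c$, and let $R(a,\cdot)$ denote the resulting circuit (so that $R_x = R(a,\cdot)$ for every $x\in A_a$). I would like to exhibit a $(N,b,\eta)$ QFAC (on average) for $(A_a, \mathcal{C})$, where $\mathcal{C} = \{f_i : x \mapsto C(x)_i,\, i \in [\overline N]\}$. The encoding map is simply $x \mapsto \Psi(x)$ restricted to $A_a$; this is the $b$-qubit state that $R_x$ accesses through its single oracle call. For each codeword position $i \in [\overline N]$, the measurement $M_i$ is obtained by simulating $R(a, i)$ up to its unique oracle query, obtaining some query $(u,y)$ (possibly probabilistically, but any randomness can be absorbed into a convex combination of measurements); then applying the two-outcome measurement $\{M^0_{u,y}, M^1_{u,y}\}$ to the encoding; and finally running the circuit's classical post-processing on the outcome. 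This yields a well-defined two-outcome POVM. The crucial point --- and essentially the only thing one has to check --- is that $M_i$ depends on $a$ and $i$, but not on $x$: all of the circuit's $x$-dependence is funneled through the oracle $M_x$, whose behavior is by definition captured by the POVM element applied to $\Psi(x)$.

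Finally, by the hypothesis on $R_x$, for every $x \in A_a$ we have $\Pr_{z}[R_x^{M_x}(z) = C(x)_z] \geq 1/2 + \eta$, which in the QFAC formalism says precisely that on average over $i \in [\overline N]$, the measurement $M_i$ applied to $\Psi(x)$ outputs $f_i(x) = C(x)_i$ with probability at least $1/2 + \eta$. Hence $(\Psi|_{A_a}, \{M_i\}_i)$ is a $(N,b,\eta)$ QFAC on average for $(A_a, \mathcal{C})$, so by definition of $s$ we conclude $|A_a| \leq s$. I do not expect any real obstacle here; the content of the claim is just to formalize the translation from an oracle circuit with classical advice and one quantum query into a QFAC, after which the bound is a pigeonhole over the $2^c$ advice strings.
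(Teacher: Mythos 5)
Your proof is correct and follows essentially the same approach as the paper: partition $A$ by the $c$-bit advice string, observe that within each part the circuit $R(a,\cdot)$ is fixed and only the $b$-qubit state $\Psi(x)$ varies, and read off a QFAC (on average, matching the paper's Proposition~\ref{prop:listqfac}). Your explicit description of how the measurement $M_i$ is assembled from the circuit's single oracle query, and your note that the result is a QFAC \emph{on average} over $i$, are a slightly more careful unpacking of what the paper states in a single sentence, but the argument is the same.
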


\begin{proof}
The  $c$ advice bits partition the set $A$ into $2^c$ sets $A_s$, for $s\in\{0,1\}^c$. Fix such a $s$ and consider the set $A_s$. Since $s$ has been fixed, all $x\in A_s$ have the same circuit $R_x$; only the $b$-qubit quantum state $\Psi(x)$ on which it operates depends on $x$. Hence there is a fixed set of measurements such that, for a random $y$, the measurement $M_y$ on $\Psi(x)$ outputs $C(x)_y$ with probability $1/2+\eta$. This means we have a $(N,b,\eta)$ QFAC for $(A_s,\mathcal{C})$. Hence the size of $A_s$ is bounded by the maximum size of any set for which such a code exists. This gives us the promised bound on $A$.
\end{proof}  

To finish the proof of Theorem~\ref{thm:XORextmain}, note that by Proposition~\ref{prop:listqfac}, any $(N,b,\eps/m)$ QFAC for $(A,\mathcal{C})$ satisfies
 $$ \log |A| \leq b+ H(\delta)\, N +\log L + O(\log m/\eps)  $$

Applying Claim~\ref{obs2} to the advice circuit promised by Lemma~\ref{lem:main1}, we deduce that the number of strings $x$ such that (\ref{an2}) holds is at most $2^{b+H(\delta)N +\log L+O(\log(m/\eps)) }\cdot 2^{m(1+\rho) +\log m + t + O(1)}$. 
Using $\log(m) = O(\log N)$, this expression can be upper-bounded by 
$$2^{b+H(\delta)N + m(1+\rho) +\log L + t+ O(\log (1/\eps) + \log  N)}$$
 Using the bound on $m$ given in Theorem~\ref{thm:XORextmain}, we immediately get that this expression is upper-bounded by $\eps2^K$, finishing the proof of the theorem.

\vskip1cm

\paragraph{Acknowledgements.} We are grateful to Falk Unger and Umesh Vazirani for helpful comments concerning the presentation of this manuscript. The first author would like to thank Luca Trevisan for kindly sharing his understanding of extractors.
\bibliography{macros,luca}
\bibliographystyle{alpha}

\end{document}